\setlist{itemsep=0pt, topsep=3pt}
\renewcommand\vec[1]{\boldsymbol{#1}}
\newtheorem{theorem}{Theorem}[section]
\newtheorem{lemma}[theorem]{Lemma}
\newtheorem{claim}[theorem]{Claim}
\newcommand{\set}[1]{\left\{#1\right\}}
\DeclareMathOperator{\union}{\bigcup}
\newcommand{\R}{\mathbb{R}}
\newcommand{\Z}{\mathbb{Z}}
\newcommand{\calI}{{\mathcal I}}
\newcommand{\closed}{{\mathrm{closed}}}
\newcommand{\open}{{\mathrm{open}}}
\newcommand{\cost}{{\mathrm{cost}}}
\newcommand{\pcost}{{\mathrm{pcost}}}
\newcommand{\ccost}{{\mathrm{ccost}}}
\newcommand{\TC}{{\mathrm{TC}}} 
\title{On Facility Location with General Lower Bounds}
\author{Shi Li}
\date{}
\begin{document}

\maketitle

\begin{abstract}
	In this paper, we give the first constant approximation algorithm for the lower bounded facility location (LBFL) problem with general lower bounds. Prior to our work, such algorithms were only known for the special case where all facilities have the same lower bound: Svitkina \cite{Svi10} gave a $448$-approximation for the special case, and subsequently Ahmadian and Swamy \cite{AS13} improved the approximation factor to 82.6. 
	
	As in \cite{Svi10} and \cite{AS13}, our algorithm for LBFL with general lower bounds works by reducing the problem to the capacitated facility location (CFL) problem.  To handle some challenges caused by the general lower bounds, our algorithm involves more reduction steps.  
	One main complication is that after aggregation of clients and facilities at a few locations, each of these locations may contain many facilities with different opening costs and lower bounds.  To handle this issue, we introduce and reduce our LBFL problem to an intermediate problem called the transportation with configurable supplies and demands (TCSD) problem, which in turn can be reduced to the CFL problem.
\end{abstract}

\section{Introduction}
	We study the lower bounded facility location (LBFL) problem with general facility lower bounds. We are given a set $F$ of potential facility locations, a set $C$ of clients, a metric $d$ over $F \cup C$. Each facility $i \in F$ has an opening cost $f_i \geq 0$, and a lower bound $B_i \in \Z_{\geq 0}$ on the number of clients it must serve once it is opened.  The goal of the problem is to open some facilities and connect all clients to the open facilities, so as to minimize the sum of the opening cost and the connection cost.  Formally, a feasible solution to the problem is a pair $(S \subseteq F, \sigma \in S^C)$ such that for every $i \in S$, we have $|\{j \in C: \sigma_j = i\}|\geq B_i$. The goal is to minimize $\sum_{i \in S}f_i + \sum_{j \in C}d(j, \sigma_j)$.  

The problem was introduced independently by Guha et al.\ \cite{GMM00} and Karger and Minkoff \cite{KM00} as a subroutine to solve their buy-at-bulk network design problems.  The LBFL problem arises in this context since in near-optimal solutions, one needs to aggregate a certain amount of demands at a set of hub locations to avoid paying high fixed costs,  and at the same time make the cost of transporting demands small.  The uncapacitated facility location (UFL) problem, the special case of LBFL where all facilities $i$ have $B_i = 0$, is a classic problem in operations research and has been studied extensively in the literature. The lower bounds on facilities naturally arise in scenarios where a service can be provided only if there is enough demand. Then it is not surprising that the LBFL problem can find many direct applications.  

Since the special case UFL is already NP-hard, we aim to design efficient approximation algorithms for the LBFL problem. In their papers that introduced the problem, Guha et al.\ \cite{GMM00} and Karge and Minkoff \cite{KM00} developed an $O(1)$-bi-criteria approximation algorithm for LBFL that respect the lower bound constraints only approximately. Namely, the solution output by the algorithm has cost at most $O(1)$ times that of the optimum solution, and connects at least ${\beta} B_i$ clients to each open facility $i$, for some constant ${\beta} < 1$.  Such a bi-criteria approximation was sufficient for their purpose of solving the buy-at-bulk network design problems. True constant approximation algorithms are known for the special case of LBFL when all facilities have the same lower bound, i.e, $B_i = B$ for every $i \in F$. The first such algorithm is a 448-approximation algorithm due to Svitkina \cite{Svi08, Svi10}, which is based on reducing the LBFL problem to the capacitated facility location (CFL) problem. A remarkable feature of the reduction is that the roles of facilities and clients are reversed in the CFL instance.   The approximation ratio was later improved to 82.6 by Ahmadian and Swamy \cite{AS13}.  Both algorithms require the lower bounds to be uniform, and getting an $O(1)$-approximation for LBFL with general lower bounds remained an open problem, as discussed in both \cite{Svi10} and \cite{AS13}. 

In this paper, we solve the open problem in the affirmative:
\begin{theorem}\label{thm:main}
	There is a $4000$-approximation algorithm for the lower bounded facility location problem with general facility lower bounds.
\end{theorem}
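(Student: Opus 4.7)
The plan is to follow the template of Svitkina~\cite{Svi10} and Ahmadian--Swamy~\cite{AS13}, reducing the problem to capacitated facility location (CFL), but with an intermediate reduction that absorbs the heterogeneity of facility parameters introduced by the general lower bounds.

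First, I would invoke the $O(1)$-bi-criteria algorithm of \cite{GMM00, KM00} to produce a solution whose cost is within a constant factor of the optimum and in which every open facility $i$ serves at least $\beta B_i$ clients for some absolute constant $\beta<1$. This solution is used to \emph{aggregate} the clients: each client is re-routed to the facility serving it in the bi-criteria solution, producing a coarse demand $D_\ell$ at each aggregation location $\ell$. As long as the re-routing cost is charged against the bi-criteria connection cost, the remaining task is to serve these aggregated demands from the original facility set under the \emph{true} lower bounds $B_i$.

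The next step is where the novelty lies. In the uniform case one can absorb the aggregation locations into a reverse-roles CFL instance with uniform capacity $B$; with heterogeneous $B_i$'s, several facilities with different opening costs and lower bounds may coexist at (or near) the same aggregation location, and the choice among them is genuinely combinatorial. I would abstract this as the transportation with configurable supplies and demands (TCSD) problem: at each location one may activate a subset of ``configurations'' (each corresponding to a candidate facility opening, with its own cost and lower-bound-as-supply), and route the given demands at minimum total cost. The aggregated LBFL instance reduces to TCSD with only a constant-factor loss, the loss coming from bounding how far an aggregated demand may have to travel to reach a compatible configuration.

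The main obstacle, and the step I expect to dominate both the technical work and the final constant, is the reduction from TCSD back to CFL. Following Svitkina's role-reversal philosophy, facility configurations should become clients in the CFL instance, the lower bounds should become capacities of newly created CFL facilities (one per aggregation location), and opening costs should be re-assembled so that a feasible CFL solution corresponds to a valid choice of configurations covering the demand. Two issues need care here: configurations at the same location interact---one cannot naively stack their lower bounds---so the reduction must be performed configuration-by-configuration, probably with a bucketing or enumeration step over opening costs and lower bound magnitudes; and the approximation factor compounds across the chain bi-criteria $\to$ aggregation $\to$ TCSD $\to$ CFL, with the final factor $4000$ emerging as the product of these constants together with the best known CFL approximation. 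Unwinding the CFL solution back through the reductions then produces a feasible LBFL solution of cost within a constant factor of the optimum.
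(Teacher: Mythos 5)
Your high-level skeleton (bi-criteria $\to$ client aggregation $\to$ TCSD $\to$ CFL, with role reversal in the spirit of Svitkina) matches the paper's, and you correctly identify TCSD as the right intermediate abstraction to cope with heterogeneous $(f_i,B_i)$ pairs at one location. However, there are two genuine gaps that prevent this outline from closing into a proof.

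The larger gap is that you jump directly from the aggregated instance to TCSD, but there is no way to do that cleanly: a TCSD-style solution may decide not to open any facility at some aggregation location $v$, leaving the $n_v$ clients there unaccounted for, and ``bounding how far an aggregated demand may have to travel to reach a compatible configuration'' does not bound the cost of serving them in the original LBFL instance, because the lower-bound constraints prevent you from shipping them to the nearest open facility one at a time. The paper inserts an intermediate \emph{LBFL-with-penalty} instance $\calI^3$ in which unserved clients at $v$ are replaced by a penalty of order $n_v\ell_v$ (with $\ell_v = d(v,S^\circ\setminus\{v\})$), and then proves a nontrivial rerouting lemma (Lemma~\ref{lemma:I3-to-I2}): build a forest on $S^\circ$ by connecting each closed location to its nearest $S^\circ$-neighbor, push unserved clients bottom-up along tree edges, open the free facility at a location once enough clients have accumulated, and handle the 2-cycle roots separately using the fact that $n_r + n_{r'} \geq 2\beta B_r$ with $\beta > 1/2$. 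This is the step where ``penalty + connection cost'' is turned into feasibility, and the constant $\frac{2\beta}{2\beta-1}$ it yields is a real contributor to the final ratio. Without this lemma (or some substitute for it) the chain TCSD $\to$ LBFL is broken. You also omit the facility-aggregation step in which facilities $i$ with $d(v,i)<\ell_v/2$ are relocated to $v$ with the surcharge $\Theta(n_v\,d(v,i))$; this step is what makes it true that every candidate opening at $v$ can be treated as a ``configuration at $v$'' in the first place, and it separately justifies discarding facilities far from $S^\circ$ once the penalty formulation is in place.

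The second gap is in your TCSD $\to$ CFL reduction. You write that configurations ``become clients'' and that there is ``one CFL facility per aggregation location''; this is not how the correspondence works, and it will not preserve costs. In the paper, a unit of LBFL client becomes a unit of supply, the deficit $B_i - n_v$ (when positive) becomes demand, and at each location $v$ one builds \emph{several} CFL suppliers --- one per distinct opening-cost level after rounding $g$-values to integer powers of $2$ and discarding dominated pairs --- with supply amounts equal to the \emph{increments} $y^v_\ell - y^v_{\ell-1}$. The geometric-series bound $\sum_{\ell'\le\ell} h^v_{\ell'} \le 2h^v_\ell$ is what makes opening a prefix of suppliers at $v$ cost at most twice the intended configuration cost, and the fact that a CFL solution might open a non-prefix set is absorbed by taking the maximum open index. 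Your ``bucketing or enumeration'' intuition points in this direction but, as stated, does not give the one-to-one correspondence between configurations and (prefixes of) supplier sets that the reduction needs.
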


\subsection{Related Work}
The related uncapacitated facility location (UFL) problem is one of the most classic problems studied in approximation algorithms and in operations research. There has been a long line of research on UFL \cite{STA97,JV99,CS03,KPR98,CG99,JMMSV03,JMS02,MYZ06,Byrka07,AGKMMP01} and almost all major techniques for approximation algorithms have been applied to the problem (see the book of Williamson and Shmoys \cite{WS11}). The current best approximation ratio for the problem is 1.488 due to Li \cite{Li11} and there is a hardness of 1.463 \cite{GK98}. 

The capacitated facility location (CFL) problem is the facility location problem where facilities have capacities (instead of lower bounds). That is, every facility $i$ has a capacity $u_i$ and if $i$ is open, then \emph{at most} $u_i$ clients can be connected to $i$.  The problem is motivated by the scenarios where a facility has limited resources and can only serve a certain number of clients when it is open.
P\'{a}l et al.\  \cite{PTW01} gave the first constant approximation algorithm for the problem,  with an approximation ratio of $9$.  The ratio has subsequently been improved in a sequence of papers \cite{MP03, ZCY05, BGG12}, with the current state-of-art ratio being the $5$ \cite{BGG12}.
The special case of CFL where all facilities have the same capacity has also been studied in the literature \cite{KPR98, CW99, AAB10}; it admits a better approximation ratio of $3$ \cite{AAB10}. All these algorithms for CFL are based on local search; the natural linear programming relaxation for the problem has unbounded integrality gap, and thus can only lead to $O(1)$-approximation for the soft-capacitated version\footnote{In this version, each facility can be opened multiple times but we pay the facility cost for each copy.} of the problem \cite{CS99, MYZ03}, and the special case where all facility costs are the same \cite{LSS12}.
In a recent breakthrough result, An et al.\ \cite{ASS14, ASS17} gave an LP-based $O(1)$-approximation for CFL, solving a long-standing open problem listed in the book of Williamson and Shmoys \cite{WS11}.


\subsection{Our Techniques} As in \cite{Svi10, AS13}, our algorithm reduces the LBFL problem to CFL, but it involves more reduction steps due to the general lower bounds.  As in \cite{Svi10, AS13}, we first run the bi-factor approximation algorithm in \cite{GMM00, KM00} to obtain an $O(1)$ approximate solution $(S^\circ, \sigma^\circ)$ where each open facility $i \in S^\circ$ is connected by at least ${\beta} B_i$ clients, for some ${\beta} \in (1/2, 1)$. We then obtain a more structured LBFL-instance $\calI^1$ by moving all clients to $S^\circ$ according to $\sigma^\circ$, and making facilities in $S^\circ$ free. The equivalence between $\calI$ and $\calI^1$ (up to an $O(1)$-loss in the approximation ratio) is straightforward and so we can focus on $\calI^1$ from now on. A crucial structure that $\calI^1$ has is that all clients are located at $S^\circ$, and for each $i \in S^\circ$, the number $n_i$ of clients at $i$ is at least ${\beta} B_i$.


For the uniform-lower-bound case, \cite{Svi10} showed the facilities not in $S^\circ$ can be removed, as opening a facility $i \notin S^\circ$ is not much better than opening the nearest neighbor of $i$ in $S^\circ$.  Then the residual problem becomes to decide which facilities in $S^\circ$ to open and how to connect clients. 
By viewing each client as a unit supply, \cite{Svi10} showed that $\calI^1$ can be converted to an instance of CFL. 
Roughly speaking, opening a facility in the instance $\calI^1$ corresponds to \emph{not opening} the correspondent supplier in the CFL instance. An open facility $i \in S^\circ$ in $\calI^1$ may need $t$ more connected clients to meet its lower bound; this corresponds to $t$ units of demand at $i$ in the CFL instance.

One complication for the general lower bound case is that facilities outside $S^\circ$ may be useful as they may have small lower bounds and opening them can avoid long connections.  We divide these facilities into two types and handle them separately. First, we show that facilities near $S^\circ$ can be moved to $S^\circ$, sacrificing only an $O(1)$-factor in the approximation ratio; the resulting instance will be an even more structured LBFL instance $\calI^2$.  Second, we construct an instance $\calI^3$ of what we call the LBFL with penalty problem. 
As a by-product of the formulation of $\calI^3$, facilities not collocated with $S^\circ$ in $\calI^2$ (i.e, facilities that are far away from $S^\circ$ in $\calI^1$) can be removed for free.

Here is how we construct the LBFL instance $\calI^2$. For each location $v \in S^\circ$, let $\ell_v = d(v, S^\circ \setminus\{v\})$ be the distance between $v$ and its nearest neighbor in $S^\circ$, and let $N_v = \set{i \in F: d(i, v) <\ell_v /2}$ be the set of facilities that are near $v$. Then $\calI^2$ is obtained by moving all facilities in $N_v$ to $v$, and changing the opening cost of $i \in N_v$ to $f_i + \Theta(n_v d(v, i))$. We show that an $O(1)$-approximate solution to $\calI^2$ leads to an $O(1)$-approximate solution to $\calI^1$.  Roughly speaking, moving a facility $i \in N_v$ to $v$ will not affect the cost of connecting $i$ to a client $j$ not at $v$ by too much. It decreases the distance between $i \in N_v$ and clients at $v$ to $0$; however, the decrease of distances can be charged using the $\Theta(n_vd(v, i))$ term in the opening cost of $i$ in $\calI^2$.

As mentioned, we then reduce the LBFL instance $\calI^2$ to an instance $\calI^3$ of the LBFL with penalty (LBFL-P) problem. 
$\calI^3$ has the same setting as $\calI^2$, but with the following differences.  In $\calI^3$, not all clients have to be connected. Instead, we impose a penalty of $\Theta(n_v\ell_v)$ for every $v \in S^\circ$ where no facility at $v$ (or equivalently, no facility in $N_v$) is open.  The penalty term makes the problem well-posed and non-trivial: in order to avoid high penalty, we may need to open some facilities, and to satisfy the lower bound requirements for these facilities, non-trivial connections may need to be made. As a by-product of the reduction, the facilities not collocated with $S^\circ$ in $\calI^2$ can be removed from $\calI^3$ since there is no need to open them.

A key to show the equivalence of $\calI^2$ and $\calI^3$ is a procedure that converts a solution to $\calI^3$ back to a solution to $\calI^2$; for the uniform-lower-bound case, such a procedure was given in \cite{Svi10}, though the LBFL-P instance $\calI^3$ was not explicitly defined in \cite{Svi10}.  Let $S^\circ_\open$ and $S^\circ_\closed$ be the set of locations in $S^\circ$ with and without open facilities respectively.  There might be some unconnected clients in $S^\circ_\closed$ in the solution for $\calI^3$. To connect these clients, we build a forest of trees over $S^\circ$, where we have an edge from each $v \in S^\circ_\closed$ to its nearest neighbor in $S^\circ$. We connect the unconnected clients by moving them upon the trees, and open a free facility $v \in S^\circ_{\closed}$ when we accumulated enough number of them. The incurred cost can be bounded by the sum of the penalty and connection cost of the solution to $\calI^3$.

Each location $v \in S^\circ$ in $\calI^3$ has many facilities, with different opening costs and lower bounds. We may open 1 facility at a location $v$; we may also choose to not open any facility at $v$, in which case we pay a penalty cost of $\Theta(n_v\ell_v)$. Then it immediately holds that $\calI^3$ is equivalent to an instance $\calI^4$ of what we call the transportation with configurable supplies and demands (TCSD) problem.  In the instance $\calI^4$, each location $v \in S^\circ$ has a set $R_v$ of choices, each being a pair $(g \in \Z_{\geq 0}, z \in \Z)$, which corresponds to putting $z$ units of net supply at location $v$ (if $z < 0$, putting $z$ units of net supply means putting $-z$ units of demand) at a cost of $g$.  Once we made the choices for all the locations in $S^\circ$, we solve the resulting transportation problem and pay the transportation cost.  Then the goal is to minimize the total cost we pay, including the cost for the choices and the transportation cost.   By setting the sets $R_v$'s naturally, one can see the equivalence between $\calI^3$ and $\calI^4$. This is the step where we switch the role of facilities and clients: a client in $\calI^3$ becomes a unit of supply in $\calI^4$.

With the TCSD instance $\calI^4$ defined, we can finally reach our CFL instance $\calI^5$.  By losing a factor of $2$, we assume all the costs in $\calI^4$ are integer powers of $2$; then for each $v \in S^\circ$ and a value $g$ which is power of $2$, we only need to keep the pair $(g, z)$ with the largest $z$.  This allows us to set up supplies and demands at each $v$ in the CFL instance $\calI^5$, so that the following happens. Losing another factor of $2$ in the approximation ratio, we can show that there is a one-to-one correspondence between the choices we have for $v$ in $\calI^4$ and those in $\calI^5$. So an $O(1)$-approximation for $\calI^5$ gives an $O(1)$-approximation for $\calI^4$, which leads all the way back to an $O(1)$-approximation for the original LBFL instance. 


\section{Notations and Useful Definitions}
For a metric $d$, a point $v$ and a set $V$ of points in the metric, we use $d(v, V)= \min_{u \in V}d(v, u)$ to denote the distance from $v$ to its nearest point in $V$. $F$ and $C$ are always the sets of facilities and clients in the original instance. For any vector $h \in \R^{F}$ and a subset $F' \subseteq F$ of facilities, we use $h(F') := \sum_{i \in F'}h_i$ to denote the sum of $h$ values over all facilities in $F'$. For a connection vector $\sigma \in (F \cup \{\bot\})^C$, and $i \in F \cup \{\bot\}$, we define $\sigma^{-1}(i):=\{j \in C:\sigma_j = i\}$ to be the set of clients assigned to $i$; here $\sigma_j = \bot$ indicates that $j$ is not connected in $\sigma$.

We shall use a tuple $(F, C, d, f, B)$ to denote an LBFL instance, where $F, C, d, f$ and $B$ are as in the description of the problem. 
Given an LBFL instance $\calI = (F, C, d, f, B)$, and a parameter ${\beta} \in [0 ,1]$, a ${\beta}$-covered solution to $\calI$ is a pair  $(S \subseteq F, \sigma \in S^C)$ such that for every $i \in S$, we have $|\sigma^{-1}(i)| \geq {\beta} B_i$. We simply say $(S, \sigma)$ is a (valid) solution to $\calI$ if it is a $1$-covered solution.  A (valid) solution to an UFL instance $\calI = (F, C, d, f)$ is a pair $(S \subseteq F, \sigma \in S^C)$.

Given an  LBFL instance $\calI = (F, C, d, f, B)$, and a connection vector $\sigma \in F^C$, we define $\ccost_{\calI}(\sigma) := \sum_{j \in C}d(j, \sigma_j)$ to be the connection cost of the vector $\sigma$. We use $\cost_{\calI}(S, \sigma):=f(S) + \ccost_{\calI}(\sigma)$  to denote the cost of a solution (or a ${\beta}$-covered solution ) $(S, \sigma)$ to $\calI$.  Given a UFL instance $\calI = (F, C, d, f)$, we define $\cost_{\calI}(S) = f(S) + \sum_{j \in C}d(j, S)$ to be the cost of the solution $S$ to $\calI$. Notice that for UFL, it suffices to use the set $S$ of open facilities to denote a solution. 

\section{The $O(1)$-Approximation Algorithm for LBFL}
%
%
In this section, we give our $O(1)$-approximation algorithm for LBFL. The algorithm works by performing a sequence of reductions that leads to the CFL problem eventually. Each reduction is  from one instance to the next in such a way that an $O(1)$-approximation for the latter implies an $O(1)$-approximation for the former.  In Section~\ref{subsec:bi-criteria}, we review the bi-criteria approximation algorithm of \cite{GMM00, KM00}, which we use to obtain a $\beta$-covered solution $(S^\circ, \sigma^\circ)$ with cost at most $O(1)$ times the cost of the optimum 1-covered solution, where $\beta$ is a parameter whose value will be set to $2/3$ in the end.  In Section~\ref{subsec:aggregating-clients}, we aggregate the clients by moving each client $j$ to the location $\sigma^\circ_j$; this gives our LBFL instance $\calI^1$. In Section~\ref{subsec:aggregating-facilities}, we aggregate nearby facilities of $S^\circ$ at $S^\circ$ to obtain our instance $\calI^2$. In Section~\ref{subsec:LBFL-P}, we construct our LBFL with penalty (LBFL-P) instance $\calI^3$, where we do not need to connect all clients, but pay penalty for ``not opening facilities''. In Section~\ref{subsec:TCSD} we reformulate the instance $\calI^3$ as an instance $\calI^4$ of the transportation with configurable supplies and demands (TCSD) problem. In Section~\ref{subsec:CFL} we reduce $\calI^4$ to the CFL instance $\calI^5$, for which $O(1)$-approximation algorithms are known. With all the reductions, we calculate the final approximation ratio for LBFL in Section~\ref{subsec:accounting}. For convenience, the factors lost in the reductions are given in Figure~\ref{fig:reductions}.

\begin{figure}[h]
	\centering
	\includegraphics[width=0.9\textwidth]{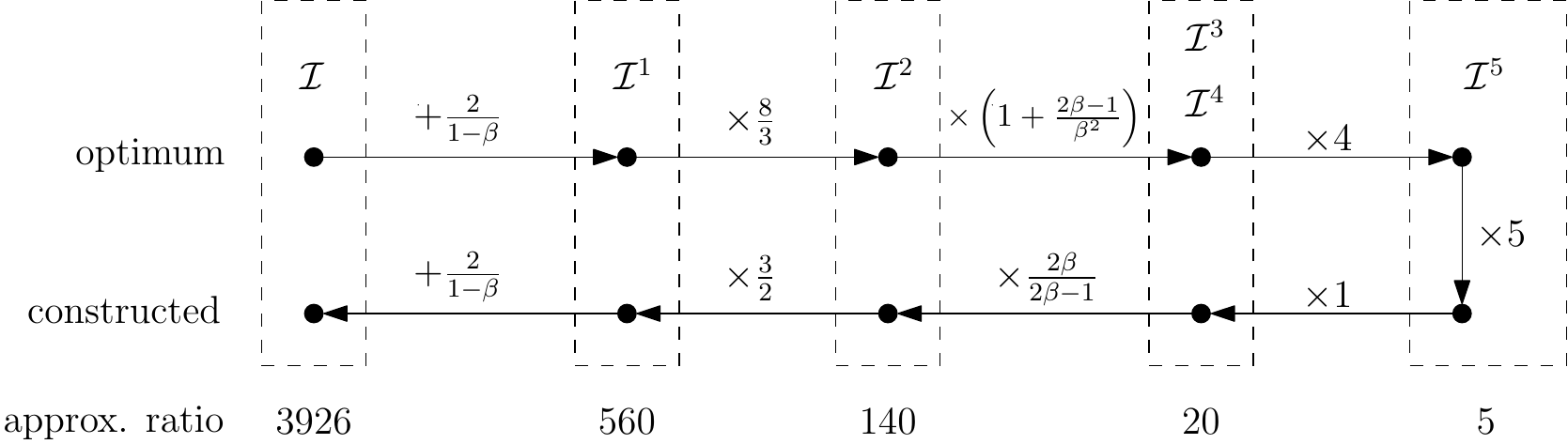}
	\caption{Factors we lose in the reductions. For each instance, the upper and lower dots stand for the optimum solution to the instance and the solution constructed by our algorithm respectively. Numbers after ``$\times$'' (``$+$'') indicates the multiplicative (additive) factor we lose when converting one solution to another. The approximation ratios (using $\beta = \frac23$) for all the instances are given in the bottom.} \label{fig:reductions}
\end{figure}

\subsection{Bi-Criteria-Approximation for LBFL via Reduction to UFL}
\label{subsec:bi-criteria}


	In this section, we apply the bi-criteria approximation algorithm of \cite{GMM00, KM00}, to obtain a ${\beta}$-covered solution $(S^\circ, {\sigma}^\circ)$ to the input LBFL instance ${\calI} = (F, C, d,  f,  B)$, where $\beta \in (1/2, 1)$ is a parameter whose value will be set to $2/3$ eventually. We give the algorithm for completeness.  Overall, we construct an auxiliary UFL instance $\calI' = (F, C, d, f')$ with some carefully designed opening cost vector $f'$, such that in any locally optimum solution $S$ to $\calI'$ under closing of facilities, every open facility $i$ is connected by at least $\beta B_i$ clients.
	
	The UFL instance ${\calI'} = (F, C, d,  f')$ has the same $F$, $C$ and $d$ as ${\calI}$, but facilities in $\calI'$ have different facility costs and no lower bounds. For every $i \in F$, let $J_i$ be the set of $B_i$ clients in $C$ nearest to $i$. For every $i \in F$, the facility cost of $i$ in instance $\calI'$ is defined as $f'_i :=  f_i + \frac{2{{\beta}}}{1-{{\beta}}}\sum_{j \in J_i}d(i, j)$. Lemma~\ref{lemma:I-to-I'} and \ref{lemma:I'-to-I} relate $\calI$ and $\calI'$ in both directions.

	\begin{lemma}
		\label{lemma:I-to-I'}
		Let $(S, \sigma)$ be any valid solution to $\calI$. Then $f'(S) \leq  f(S) + \frac{2\beta}{1-{{\beta}}}\ccost_{\calI}({\sigma})$.
	\end{lemma}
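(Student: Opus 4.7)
The plan is to just expand the definition of $f'$ and then show that, summed over facilities opened in the valid solution, the ``$B_i$-nearest-clients'' contribution is dominated by the actual connection cost of $\sigma$.

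First, I would apply the definition of $f'_i$ directly to get
\[
f'(S) \;=\; \sum_{i \in S}\left(f_i + \frac{2\beta}{1-\beta}\sum_{j \in J_i}d(i,j)\right) \;=\; f(S) + \frac{2\beta}{1-\beta}\sum_{i \in S}\sum_{j \in J_i}d(i,j).
\]
So everything reduces to proving the inequality
\[
\sum_{i \in S}\sum_{j \in J_i} d(i,j) \;\le\; \ccost_\calI(\sigma) \;=\; \sum_{j \in C} d(j,\sigma_j).
\]

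The key step is to exploit validity of $(S,\sigma)$: for each $i \in S$ we have $|\sigma^{-1}(i)| \ge B_i$. Because $J_i$ is by definition the set of $B_i$ clients closest to $i$, choosing any subset $T_i \subseteq \sigma^{-1}(i)$ of size exactly $B_i$ yields $\sum_{j \in T_i}d(i,j) \ge \sum_{j \in J_i}d(i,j)$. Then since $d(i,\cdot)\ge 0$, extending the sum from $T_i$ to all of $\sigma^{-1}(i)$ only makes it larger, giving $\sum_{j \in J_i}d(i,j) \le \sum_{j \in \sigma^{-1}(i)}d(i,j)$. Summing over $i \in S$ and noting that the sets $\{\sigma^{-1}(i)\}_{i \in S}$ partition $C$ (since $\sigma \in S^C$) finishes the inequality.

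There is no real obstacle here; the whole content is the ``nearest $B_i$ clients minimize the sum of $B_i$ distances'' observation together with the partition-of-$C$ fact. I would present it as a two-line calculation after the expansion of $f'(S)$ above. Substituting the inequality back then yields $f'(S) \le f(S) + \frac{2\beta}{1-\beta}\ccost_\calI(\sigma)$, as required.
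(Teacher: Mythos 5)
Your proof is correct and takes the same route as the paper's: expand $f'$, use that $J_i$ minimizes the sum of $B_i$ distances so $\sum_{j\in J_i}d(i,j)\le\sum_{j\in\sigma^{-1}(i)}d(i,j)$ when $|\sigma^{-1}(i)|\ge B_i$, and sum over the partition $\{\sigma^{-1}(i)\}_{i\in S}$ of $C$. You spell out the "choose a $B_i$-subset of $\sigma^{-1}(i)$ and then extend" step a bit more explicitly than the paper does, but the argument is the same.
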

	\begin{proof}
		Every $i \in S$ is connected by at least $B_i$ clients in the solution $(S, \sigma)$ to ${\calI}$. Thus $\sum_{j \in J_i}d(i, j) \leq \sum_{j \in \sigma^{- 1}(i)} d(i, j)$. 
		\begin{flalign*}
			&& f'({S}) &= \sum_{i \in {S}}f'_i 
			= \sum_{i \in {S}}\left( f_i + \frac{2{{\beta}}}{1-{{\beta}}}\sum_{j \in J_i}d(i, j)\right) \leq  f(S) + \frac{2{\beta}}{1-{\beta}}\sum_{i \in S, j\in \sigma^{-1}(i)}d(i, j) && \\
			&& &=  f({S}) + \frac{2{{\beta}}}{1-{{\beta}}}\sum_{j\in C}d(j, \sigma_j) 
			=  f({S}) + \frac{2{{\beta}}}{1-{{\beta}}}\ccost_{\calI}(\sigma).&& \qedhere
		\end{flalign*}
	\end{proof}

\begin{lemma}
	\label{lemma:I'-to-I}
	Given any solution $S'$ to $\calI'$, we can efficiently find a ${{\beta}}$-covered solution 
	$(S, {\sigma})$ to ${\calI}$ such that $\sigma_j$ is the nearest facility to $j$ in $S$, and $\cost_{{\calI}}(S, {\sigma}) \leq \cost_{\calI'}(S')$.
\end{lemma}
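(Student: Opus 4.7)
The plan is to construct $(S, \sigma)$ from $S'$ by an iterative \emph{closing} procedure: initialize $S := S'$ with $\sigma_j$ set to the nearest facility of $j$ in $S$; then repeatedly pick any $i \in S$ that violates $\beta$-coverage (i.e.\ $|\sigma^{-1}(i)| < \beta B_i$), delete it from $S$, and update each $\sigma_j$ to its nearest facility in the new $S$.  Closing a facility can only increase the load on the remaining ones, so previously covered facilities stay covered and the number of violators strictly decreases; the procedure halts at some $\beta$-covered $(S, \sigma)$.  A light preprocessing step that removes any $i$ with $B_i > |C|$ (such a facility can never be opened by a feasible LBFL solution) guarantees we never face the degenerate case where the last surviving facility is itself a violator.

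The key technical step is to show that each closure does not increase $\cost_{\calI'}$, from which $\cost_{\calI'}(S) \leq \cost_{\calI'}(S')$ follows by induction.  Fix a violator $i$ and set $J'_i := J_i \setminus \sigma^{-1}(i)$.  Two simple cardinality facts drive the argument: $|J'_i| > (1-\beta)B_i$ (since $|J_i|=B_i$ and $|\sigma^{-1}(i)|<\beta B_i$), and hence $|\sigma^{-1}(i)|/|J'_i| < \beta/(1-\beta)$.  For each $j' \in J'_i$ we have $\sigma_{j'} \in S \setminus \{i\}$ and $d(j',\sigma_{j'}) \leq d(j',i)$, so for any $j \in \sigma^{-1}(i)$ the triangle inequality gives
\[
  d(j, S\setminus\{i\}) \leq d(j,j') + d(j',\sigma_{j'}) \leq d(j,i) + 2\,d(i,j').
\]
Averaging this bound over $j' \in J'_i$ and summing over $j \in \sigma^{-1}(i)$, and plugging in the two cardinality facts, the total increase in connection cost from closing $i$ is at most $\frac{2\beta}{1-\beta}\sum_{j' \in J_i} d(i,j') = f'_i - f_i$, which is strictly smaller than the $f'_i$ we save by removing $i$ from $S$.

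Once the closing procedure is justified, the passage from $\calI'$ to $\calI$ is immediate.  Since $\sigma_j$ is the nearest facility in $S$, $\ccost_{\calI}(\sigma) = \sum_{j\in C} d(j,S)$ equals the UFL connection cost of $S$; combined with $f(S) \leq f'(S)$, this yields
\[
  \cost_{\calI}(S,\sigma) = f(S) + \ccost_{\calI}(\sigma) \leq f'(S) + \sum_{j\in C} d(j,S) = \cost_{\calI'}(S) \leq \cost_{\calI'}(S').
\]
The only delicate piece is the triangle-inequality-plus-averaging step: the constant $\frac{2\beta}{1-\beta}$ hard-wired into the definition of $f'_i$ is calibrated exactly so that the $|\sigma^{-1}(i)|/|J'_i|$ prefactor produced by the averaging fits inside $\frac{2\beta}{1-\beta}$, so that the extra $f'_i - f_i$ built into each opening cost pays for rerouting the fewer-than-$\beta B_i$ clients of $i$ through the more-than-$(1-\beta)B_i$ elements of $J'_i$.
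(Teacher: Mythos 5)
Your proposal is correct and takes essentially the same approach as the paper: both close facilities and bound the rerouting cost of a $\beta$-coverage violator's clients through $J_i \setminus \sigma^{-1}(i)$ by $f'_i - f_i$ via the triangle inequality, so that the $\frac{2\beta}{1-\beta}\sum_{j\in J_i}d(i,j)$ padding built into $f'_i$ absorbs the increase. The only cosmetic differences are that the paper runs a general local search (close any $i$ whose removal does not raise $\cost_{\calI'}$) and derives a contradiction at the local optimum rather than directly closing violators, and it picks the single nearest $j'\in J_i\setminus\sigma^{-1}(i)$ to $i$ instead of averaging over all of $J'_i$; both choices yield the same bound, and you also make explicit the WLOG normalization $B_i\le|C|$ that the paper leaves implicit.
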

\begin{proof}
	We start from the set $S = S'$. While there exists some $i \in S$ such that $\cost_{\calI'}(S \setminus \{i\}) \leq \cost_{\calI'}(S)$, we update $S \gets S \setminus \{i\}$. Thus, eventually, we obtain a locally optimal solution $S$ to $\calI'$ under closing of facilities.  Then our solution to ${\calI}$ is $(S, \sigma)$, where $\sigma$ is the vector connecting every $j \in C$ to its nearest facility in $S$. Clearly, $\cost_{{\calI}}(S, \sigma) \leq \cost_{\calI'}(S)$ since $f(S) \leq f'(S)$. During the local search step, we only decreased $\cost_{\calI'}(S)$; thus, $\cost_{{\calI}}(S, \sigma) \leq \cost_{\calI'}(S)\leq \cost_{\calI'}(S')$.
	
	It remains to show that in the solution $(S, {\sigma})$, every facility $i \in S$ is connected by at least ${{\beta}} B_i$ clients. Assume towards the contradiction that some $i \in S$ has $|\sigma^{-1}(i)| < {{\beta}} B_i$. Then there are at least $(1-{{\beta}})B_i$ clients in $J_i \setminus \sigma^{-1}(i)$. One client in $J_i \setminus \sigma^{-1}(i)$, say $j'$, has 
	\begin{align*}
		d(i, j') \leq \frac{1}{(1-{{\beta}})B_i}\sum_{j \in J_i}d(i, j).
	\end{align*}
	Since $j'$ is not connected to $i$ in the solution $(S, {\sigma})$, it must be connected to some other facility $i' \in S$ with $d(j', i') \leq d(j', i)$. Then, we consider the cost of connecting all clients in $\sigma^{-1}(i)$ to $i'$:
	\begin{align*}
		\sum_{j \in \sigma^{-1}(i)}d(j, i') &\leq \sum_{j \in \sigma^{-1}(i)} \left(d(j, i) + d(i, j') + d(j', i')\right) \leq \sum_{j \in \sigma^{-1}(i)} d(j, i) + |\sigma^{-1}(i)|\times 2d(i, j')\\
		&\leq \sum_{j \in \sigma^{-1}(i)} d(j, i) + {{\beta}} B_i\times \frac{2}{(1-{{\beta}})B_i}\sum_{j \in J_i}d(i, j)
		=\sum_{j \in \sigma^{-1}(i)} d(j, i) + \frac{2{{\beta}}}{1-{{\beta}}}\sum_{j \in J_i}d(i, j).
	\end{align*}
	
	Then we focus on the solution $S$ for the instance ${\calI'}$ and try to shut down the facility $i \in S$ and connect all the clients connected to $i$ to $i'$. (If twe connect these clients to their nearest facility in $S \setminus \{i\}$, the connection cost can only be smaller.)  The increase in the connection cost is at most $\frac{2{{\beta}}}{1-{{\beta}}}\sum_{j \in J_i}d(i, j)$, which is at most $f'_i$.  Thus, $\cost_{\calI'}(S \setminus\{i\}) \leq \cost_{\calI'}(S)$, contradicting the termination condition. Thus $(S, \sigma)$ is a ${\beta}$-covered solution to $\calI$. 
\end{proof}


Let $(S^*, {\sigma}^*)$ be the optimum solution to the LBFL instance $\calI$. Then, the above two lemmas lead to a bi-criteria approximation for LBFL:
\begin{lemma}
	\label{lemma:applying-UFL}
	We can efficiently find a  ${\beta}$-covered solution $(S^\circ, \sigma^\circ)$ to $\calI$ such that 
	\begin{align*}
		\cost_{\calI}(S^\circ, {\sigma}^\circ) \leq \frac{2}{1-\beta}\cost_{\calI}(S^*, \sigma^*).
	\end{align*}
	Moreover, $\sigma^\circ_j$ is the nearest facility in $S^\circ$ to $j$ for every $j \in C$.
\end{lemma}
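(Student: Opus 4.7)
The plan is to chain Lemmas~\ref{lemma:I-to-I'} and~\ref{lemma:I'-to-I} through the auxiliary UFL instance $\calI'$. First, I would run a polynomial-time UFL solver on $\calI'$ to produce a UFL solution $S^\dagger$ whose cost matches the UFL optimum (modulo a UFL-approximation constant that can be absorbed into the overall $4000$ bound of Theorem~\ref{thm:main}). I would then invoke Lemma~\ref{lemma:I'-to-I} with $S' = S^\dagger$ to obtain a $\beta$-covered solution $(S^\circ, \sigma^\circ)$ to $\calI$ with $\cost_\calI(S^\circ, \sigma^\circ) \leq \cost_{\calI'}(S^\dagger)$ and $\sigma^\circ_j$ the nearest facility in $S^\circ$ to $j$; the latter claim is immediate from the conclusion of that lemma.

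Next, I would bound $\cost_{\calI'}(S^\dagger)$ by $\cost_{\calI'}(S^*)$, using that $S^*$ is itself a feasible UFL solution for $\calI'$. Writing $\cost_{\calI'}(S^*) = f'(S^*) + \sum_{j \in C}d(j, S^*)$, Lemma~\ref{lemma:I-to-I'} applied to $(S^*, \sigma^*)$ yields $f'(S^*) \leq f(S^*) + \frac{2\beta}{1-\beta}\ccost_\calI(\sigma^*)$, while the trivial bound $\sum_{j \in C}d(j, S^*) \leq \ccost_\calI(\sigma^*)$ holds because $\sigma^*_j \in S^*$ for every $j$. Adding the two gives
\[
\cost_{\calI'}(S^*) \;\leq\; f(S^*) + \frac{1+\beta}{1-\beta}\ccost_\calI(\sigma^*) \;\leq\; \frac{2}{1-\beta}\cost_\calI(S^*, \sigma^*),
\]
where the last step uses $1 \leq \frac{2}{1-\beta}$ and $\frac{1+\beta}{1-\beta} \leq \frac{2}{1-\beta}$ for $\beta \in [0,1)$. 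Chaining inequalities yields the claimed bound.

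The main obstacle is the UFL-solve step in the first paragraph: UFL is NP-hard, so an exact solver is not available, and any polynomial-time UFL $\alpha$-approximation introduces an extra factor $\alpha$. This factor does not explicitly appear in the statement of Lemma~\ref{lemma:applying-UFL}, so the lemma is most naturally read as stating the contribution of the LBFL-to-UFL reduction alone, with the UFL-approximation constant tracked separately in the overall accounting. Beyond that point the argument is a mechanical combination of the two lemmas already established.
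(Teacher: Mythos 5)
There is a genuine gap. Your chain of inequalities is fine as far as it goes, but it delivers only $\cost_{\calI'}(S^\dagger) \le \alpha_{\text{UFL}} \cdot \frac{2}{1-\beta}\cost_\calI(S^*, \sigma^*)$ with an extra UFL approximation factor $\alpha_{\text{UFL}} > 1$, and your attempt to rationalize this away --- by reading Lemma~\ref{lemma:applying-UFL} as implicitly suppressing an unstated UFL constant that gets ``tracked separately in the overall accounting'' --- is wrong. The lemma is meant literally: Theorem~\ref{thm:I-and-I1} and the final accounting in Section~\ref{subsec:accounting} plug in $\frac{2}{1-\beta} = 6$ (for $\beta = \tfrac23$) directly, with no room for an extra multiplicative UFL constant, and the claimed bound $560 \times 7 + 6 \le 4000$ would fail if one appeared.

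The missing idea is that the paper does not invoke a generic $\alpha$-approximation for UFL; it invokes the algorithm of \cite{JMMSV03}, which has a \emph{bi-factor} $(1,2)$-guarantee: it outputs $S'$ with $\cost_{\calI'}(S') \le f'(S'^*) + 2\sum_{j\in C} d(j, S'^*)$ for every solution $S'^*$. The key point is the factor $1$ on the facility term --- the algorithm never inflates $f'(S^*)$ --- while the connection term is blown up only by $2$. Applying this with $S'^* = S^*$ and invoking Lemma~\ref{lemma:I-to-I'} gives
\begin{align*}
\cost_{\calI'}(S') \le f'(S^*) + 2\sum_{j\in C}d(j, S^*) \le f(S^*) + \frac{2\beta}{1-\beta}\ccost_\calI(\sigma^*) + 2\ccost_\calI(\sigma^*) = f(S^*) + \frac{2}{1-\beta}\ccost_\calI(\sigma^*),
\end{align*}
using $\frac{2\beta}{1-\beta} + 2 = \frac{2}{1-\beta}$, and then Lemma~\ref{lemma:I'-to-I} finishes exactly as you described. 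Your version instead uses the factor $1$ on the connection term (getting $\frac{1+\beta}{1-\beta}$) but then needs an exact UFL solver, which does not exist. So the calculation inside your proof is arithmetically correct but hinges on an impossible algorithmic step; switching to the $(1,2)$-bi-factor UFL guarantee is what makes the constant come out exactly right.
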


\begin{proof}
	\cite{JMMSV03} gives an approximation algorithm for UFL, that outputs a solution $S'$ to $\calI'$ such that for every solution $S'^*$ to $\calI'$, we have $\cost_{\calI'}(S') \leq f'(S'^*) + 2\sum_{j \in C}d(j, S'^*)$. Running the algorithm to obtain a solution $S'$ and applying the inequality with $S'^*$ replaced by $S^*$, we have
	\begin{align*}
		\cost_{\calI'}(S') &\leq f'(S^*) + 2\sum_{j \in C}d(j, S^*)  \leq f(S^*) + \frac{2{\beta}}{1-{\beta}} \ccost_{\calI}({\sigma}^*) + 2\ccost_{\calI}(\sigma^*) \\
		&= f(S^*) + \frac{2}{1-\beta}\ccost_{\calI}(\sigma^*) \leq \frac{2}{1-\beta}\cost_{\calI}(S^*, \sigma^*).
	\end{align*}
	where the second inequality follows by applying Lemma~\ref{lemma:I-to-I'} with $(S, \sigma) = (S^*, \sigma^*)$.  Applying Lemma~\ref{lemma:I'-to-I}, we obtain a $\beta$-covered solution $(S^\circ, \sigma^\circ)$ to $\calI$ such that $\cost_{\calI}(S^\circ, \sigma^\circ) \leq \cost_{\calI'}(S')\leq \frac{2}{1-\beta}\cost_{\calI}(S^*, \sigma^*)$. By the lemma, $\sigma^\circ_j$ is the nearest facility in $S^\circ$ to $j$ for every $j \in C$. 
\end{proof}
So,  we can apply Lemma~\ref{lemma:applying-UFL} to obtain a ${\beta}$-covered solution $(S^\circ, \sigma^\circ)$ to $\calI$ satisfying the conditions stated in the theorem.  Without loss of generality, we assume any two different facilities $i, i' \in S^\circ$ have $d(i, i') > 0$.

\subsection{Aggregating Clients}
\label{subsec:aggregating-clients}


With the $\beta$-covered solution $(S^\circ, {\sigma}^\circ)$ defined, we perform the client aggregation step where we move all the clients to their respective nearest facilities in  $S^\circ$. We also make the facilities in $S^\circ$ free since we can afford to pay their opening costs.
Formally, our new LBFL instance is $\calI^1 = (F, C, d^1,  f^1,  B)$, where we have
\begin{itemize}
	\item $d^1(i, i') = d(i, i')$ for every $i, i' \in F$, $d^1(i, j) = d(i, \sigma^\circ_j)$ for every $i \in F, j \in C$ and $d^1(j, j') = d(\sigma^\circ_{j}, \sigma^\circ_{j'})$ for every $j, j' \in C$, 
	\item $f^1_i = 0$ if $i \in S^\circ$ and $f^1_i = f_i$ if $i \notin S^\circ$.
\end{itemize}

Notice that $\calI$ and  $\calI^1$ have the same $F, C$ and $ B$, so they have the same set of valid solutions. Since we moved each client $j$ by a distance of $d(j, \sigma^\circ_j)$, the following claim holds.
\begin{claim}
	\label{claim:I-and-I1}
	For every valid solution $(S, {\sigma})$ to $\calI$ and $\calI^1$, we have
	\begin{align*}
		\Big|\ccost_{\calI^1}({\sigma}) - \ccost_{\calI}({\sigma})\Big| \leq \sum_{j \in C}d(j, \sigma^\circ_j) = \ccost_{\calI}({\sigma}^\circ).
	\end{align*}
\end{claim}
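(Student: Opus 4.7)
The plan is to expand both connection costs by their definitions and control each summand via a single application of the triangle inequality. Writing $\ccost_{\calI}(\sigma) = \sum_{j\in C} d(j,\sigma_j)$ and, using the definition of $d^1$ restricted to facility--client pairs, $\ccost_{\calI^1}(\sigma) = \sum_{j\in C} d^1(j,\sigma_j) = \sum_{j\in C} d(\sigma^\circ_j,\sigma_j)$, the claim reduces to comparing $d(j,\sigma_j)$ and $d(\sigma^\circ_j,\sigma_j)$ termwise.

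For each $j \in C$, two applications of the triangle inequality in the original metric $d$ give
\begin{align*}
d(\sigma^\circ_j,\sigma_j) &\le d(\sigma^\circ_j, j) + d(j,\sigma_j), \\
d(j,\sigma_j) &\le d(j,\sigma^\circ_j) + d(\sigma^\circ_j,\sigma_j),
\end{align*}
so that $\bigl|d(j,\sigma_j) - d(\sigma^\circ_j,\sigma_j)\bigr| \le d(j,\sigma^\circ_j)$. Summing over $j\in C$ and applying the triangle inequality for the absolute value of a sum yields
\begin{align*}
\Bigl|\ccost_{\calI^1}(\sigma) - \ccost_{\calI}(\sigma)\Bigr| \le \sum_{j\in C} \bigl|d(\sigma^\circ_j,\sigma_j) - d(j,\sigma_j)\bigr| \le \sum_{j\in C} d(j,\sigma^\circ_j).
\end{align*}
The final equality $\sum_{j\in C} d(j,\sigma^\circ_j) = \ccost_{\calI}(\sigma^\circ)$ is just the definition of the connection cost of $(S^\circ,\sigma^\circ)$ in the original metric.

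There is no real obstacle here: the claim is essentially the statement that moving each client $j$ by $d(j,\sigma^\circ_j)$ changes its contribution to the connection cost by at most that same amount, regardless of where $\sigma_j$ is located. The only thing to be mildly careful about is that $\sigma_j$ can be any facility in $F$ (not only in $S^\circ$), so we must use the definition $d^1(i,j) = d(i,\sigma^\circ_j)$ which is valid for all $i \in F$, and then the triangle inequality is applied in the ambient metric $d$ on $F \cup C$.
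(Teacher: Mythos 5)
Your proof is correct and simply fills in the triangle-inequality details that the paper leaves implicit (the paper just remarks that moving each client $j$ by $d(j,\sigma^\circ_j)$ proves the claim). Same approach, made explicit.
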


\begin{theorem}
	\label{thm:I-and-I1}
	Given an $\alpha_1$-approximate solution to $\calI^1$,  we can efficiently find an $\alpha$-approximate solution to $\calI$, where $\alpha = \alpha_1\left(1 + \frac{2}{1-\beta}\right) + \frac{2}{1-\beta}$.
\end{theorem}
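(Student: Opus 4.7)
The plan is to output the $\alpha_1$-approximate solution $(S^1,\sigma^1)$ to $\calI^1$ directly as the candidate solution to $\calI$. Since $\calI$ and $\calI^1$ share the same facility set $F$, client set $C$, and lower-bound vector $B$, the validity of $(S^1,\sigma^1)$ as a $1$-covered solution to $\calI$ is inherited for free, so the algorithm itself is trivial. The real work is to bound $\cost_{\calI}(S^1,\sigma^1)$ in terms of $\cost_{\calI}(S^*,\sigma^*)$, which I will do via two comparisons: the cost of a single fixed solution under $\calI$ versus $\calI^1$, and the optimum value of $\calI^1$ versus that of $\calI$.

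For the first comparison, $f^1_i=f_i$ outside $S^\circ$ and $f^1_i=0$ on $S^\circ$, so the opening-cost discrepancy between $\calI$ and $\calI^1$ for any solution $(S,\sigma)$ is exactly $f(S\cap S^\circ)\leq f(S^\circ)$, while Claim~\ref{claim:I-and-I1} bounds the connection-cost discrepancy by $\ccost_\calI(\sigma^\circ)$. Applied to $(S^1,\sigma^1)$ this gives
\begin{align*}
\cost_\calI(S^1,\sigma^1) \leq \cost_{\calI^1}(S^1,\sigma^1) + f(S^\circ) + \ccost_\calI(\sigma^\circ) = \cost_{\calI^1}(S^1,\sigma^1) + \cost_\calI(S^\circ,\sigma^\circ).
\end{align*}
For the second comparison, I evaluate the optimum $(S^*,\sigma^*)$ of $\calI$ as a feasible solution of $\calI^1$; the same two inequalities, now used in the opposite direction (using $f^1\leq f$ pointwise and the symmetric form of Claim~\ref{claim:I-and-I1}), show that the optimum cost of $\calI^1$ is at most $\cost_\calI(S^*,\sigma^*) + \ccost_\calI(\sigma^\circ)$.

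The last ingredient is Lemma~\ref{lemma:applying-UFL}, which yields $\ccost_\calI(\sigma^\circ) \leq \cost_\calI(S^\circ,\sigma^\circ) \leq \tfrac{2}{1-\beta}\cost_\calI(S^*,\sigma^*)$. Chaining the three bounds, using $\cost_{\calI^1}(S^1,\sigma^1) \leq \alpha_1$ times the optimum of $\calI^1$, absorbs the additive slack into $\tfrac{2}{1-\beta}\cost_\calI(S^*,\sigma^*)$ and delivers $\cost_\calI(S^1,\sigma^1) \leq \bigl[\alpha_1\bigl(1+\tfrac{2}{1-\beta}\bigr) + \tfrac{2}{1-\beta}\bigr]\cost_\calI(S^*,\sigma^*)$, which is the claimed factor $\alpha$. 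There is no genuine obstacle here: the entire proof is a short accounting exercise whose only subtle point is making sure the $\ccost_\calI(\sigma^\circ)$ slack introduced by the client-aggregation step is paid for by the bi-criteria guarantee on $(S^\circ,\sigma^\circ)$ in both the upper bound on $\cost_\calI(S^1,\sigma^1)$ and the upper bound on the optimum of $\calI^1$.
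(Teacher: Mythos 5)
Your proposal is correct and follows essentially the same route as the paper: output the $\alpha_1$-approximate solution to $\calI^1$ unchanged, compare its cost under $\calI$ versus $\calI^1$ using Claim~\ref{claim:I-and-I1} and the pointwise relation between $f$ and $f^1$, compare $\cost_{\calI^1}(S^*,\sigma^*)$ with $\cost_{\calI}(S^*,\sigma^*)$ the same way, and then absorb the two $\cost_\calI(S^\circ,\sigma^\circ)$ / $\ccost_\calI(\sigma^\circ)$ slack terms via Lemma~\ref{lemma:applying-UFL}. The only superficial difference is that you package the facility-cost discrepancy as $f(S\cap S^\circ)\leq f(S^\circ)$ while the paper writes $f(S)\leq f^1(S)+f(S^\circ)$; these are the same bound.
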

\begin{proof}
	By Lemma~\ref{lemma:applying-UFL} and Claim~\ref{claim:I-and-I1}, we have 
	\begin{align}
		\cost_{\calI^1}(S^*, {\sigma}^*) &= f^1(S^*) + \ccost_{\calI^1}({\sigma}^*) \leq f(S^*) + \ccost_{\calI}({\sigma}^*) + \ccost_{\calI}({\sigma}^\circ) \nonumber\\
		&=\cost_{\calI}(S^*, \sigma^*) + \ccost_{\calI}({\sigma}^\circ) \leq \left(1+\frac{2}{1-\beta}\right)\cost_{\calI}(S^*, \sigma^*).
		\label{inequ:cost-I1-S*}
	\end{align}
	Then, an $\alpha_1$-approximate solution $(S, {\sigma})$ to $\calI^1$ will have $\cost_{\calI^1}(S, {\sigma}) \leq \alpha_1\cdot\cost_{\calI^1}(S^*, {\sigma}^*)$. 
	\begin{flalign*}
		&& &\quad \cost_{\calI}(S, {\sigma}) &&\\
		&& &= f(S) + \ccost_{\calI}(S, {\sigma}) \quad \leq \quad f^1(S) + f(S^\circ) + \ccost_{\calI^1}({\sigma}) + \ccost_{\calI}({\sigma}^\circ) && \text{by Claim~\ref{claim:I-and-I1}}\\
		&& &= \cost_{\calI}(S^\circ, {\sigma}^\circ) + \cost_{\calI^1}(S, \sigma)
		\quad \leq \quad \cost_{\calI}(S^\circ, {\sigma}^\circ) + \alpha_1\cdot\cost_{\calI^1}(S^*, {\sigma}^*) &&  \\
		&& &\leq \frac{2}{1-\beta}\cost_{\calI}(S^*, \sigma^*) + \alpha_1 \left(1+\frac{2}{1-\beta}\right)\cost_{\calI}(S^*, \sigma^*)   && \text{by Lemma~\ref{lemma:applying-UFL} and \eqref{inequ:cost-I1-S*}}\\
		&& &= \left(\alpha_1 \left(1 + \frac{2}{1-\beta}\right) + \frac{2}{1-\beta}\right)\cost_{\calI}(S^*, \sigma^*) \ =\ \alpha\cdot\cost_{\calI}(S^*, {\sigma}^*).  && \qedhere
	\end{flalign*}
\end{proof}

Thus, from now on, we can focus on the instance $\calI^1 = (F, C, d^1, f^1, B)$, where all the clients are collocated with facilities in $S^\circ$ in $\calI^1$. Thereafter it is convenient for us to view $S^\circ$ as a set of locations, which will be denoted using $v, r$ and $r'$. Occasionally, we shall use the fact that each $v \in S^\circ$ is also a facility with 0 opening cost.

For every $v \in S^\circ$, clients in $\sigma^{\circ-\!1}(v)$ are at location  $v$ and let $n_v := |\sigma^{\circ-\!1}(v)|$ be the number of such clients. Thus $n_v \geq {\beta} B_v$ since $(S^\circ, \sigma^\circ)$ is a ${\beta}$-covered solution. 


\subsection{Aggregating Nearby Facilities}
\label{subsec:aggregating-facilities}

	In this section, we move facilities near $S^\circ$ to $S^\circ$. For every $v \in S^\circ$, we define $\ell_v = d(v, S^\circ \setminus v)>0$ to be the distance from $v$ to its nearest neighbor in $S^\circ$.\footnote{We assume $|S^\circ| \geq 2$; otherwise all the clients are at the same location and the instance $\calI^1$ is easy.} We define $N_v = \set{i \in F: d(v, i) < \ell_v/2}$ to be the set of facilities in the open ball of radius $\ell_v/2$ centered at $v$. It is then easy to see that the sets $\set{N_v: v \in S^\circ}$ are disjoint.  For every $i \in F$, let $\phi_{i}$ be the location $v \in S^\circ$ such that $i \in N_v$, or let $\phi_{i} = i$ if no such $v$ exists.  

We then construct a new LBFL instance $\calI^2$ by moving all the facilities in $N_v$ to $v$ and changing their facility costs.  Formally, our new LBFL instance is $\calI^2 = (F, C, d^2,  f^2,  B)$, where
\begin{itemize}
	\item  $d^2(i, i') = d^1(\phi_i, \phi_{i'})$ for every $i, i' \in F$, $d^2(i, j) = d^1(\phi_i, j)$ for every $i \in F, j \in C$ and $d^2(j, j') = d^1(j, j')$ for every $j, j' \in C$;
	\item $f^2_{i} = f^1_{i} + \frac23n_i d^1(v, i)$ for every $v \in S^\circ$ and $i \in N_v$; if $i \notin \union_{v \in S^\circ}N_v$, then we have $f^2_{i} = f^1_{i}$. Notice that every $v \in S^\circ$ is a facility with $f^2_v = f^1_v + n_vd^1(v, v) = 0$.
\end{itemize}

\begin{claim}
	\label{claim:d1-to-d2}
	For every $j \in C$, and every $i \in F$, we have $d^2(i, j) \leq 2d^1(i, j)$.
\end{claim}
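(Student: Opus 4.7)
The plan is to do a case analysis on whether $i$ has been moved by the aggregation, i.e., whether $i \in N_v$ for some $v \in S^\circ$. Let $r = \sigma^\circ_j \in S^\circ$ denote the location to which $j$ was moved in the client-aggregation step, so that $d^1(i,j) = d(i,r)$ for every $i \in F$.

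If $i \notin \bigcup_{v \in S^\circ} N_v$, then by definition $\phi_i = i$, so $d^2(i,j) = d^1(\phi_i, j) = d^1(i, j)$ and the inequality holds trivially with a factor of $1$. Otherwise, $i \in N_v$ for some $v \in S^\circ$, and $\phi_i = v$, so what needs to be shown is $d^1(v, j) \leq 2 d^1(i, j)$, i.e.\ $d(v, r) \leq 2 d(i, r)$.

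If $r = v$, then $d(v, r) = 0$ and the inequality is immediate. The main (and only nontrivial) case is $r \neq v$. Here I would invoke the definition of $\ell_v$: since $r \in S^\circ \setminus \{v\}$, we have $d(v, r) \geq \ell_v$, and since $i \in N_v$, we have $d(v, i) < \ell_v/2 \leq d(v, r)/2$. Combining this with the triangle inequality $d(v, r) \leq d(v, i) + d(i, r)$ gives $d(v, r) < d(v, r)/2 + d(i, r)$, i.e.\ $d(v, r) < 2 d(i, r)$, as desired.

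There is no real obstacle here; the bound is essentially the triangle inequality coupled with the fact that $N_v$ is a ball of radius strictly less than half the distance from $v$ to any other point of $S^\circ$, which is exactly why the sets $\{N_v\}_{v \in S^\circ}$ were defined this way. The only thing to be careful about is handling both the $r = v$ subcase (where the left-hand side collapses to $0$) and the $i \notin \bigcup_v N_v$ case (where nothing has been moved) separately, before reducing to the triangle-inequality computation above.
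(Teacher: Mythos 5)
Your proof is correct and follows essentially the same argument as the paper: the same case split on whether $i$ lies in some $N_v$ and whether $j$ is collocated with $v$, followed by the triangle inequality combined with $d^1(v,i) < \ell_v/2 \leq d^1(v,j)/2$. The only cosmetic difference is that you unwind $d^1$ back to $d$ via $r = \sigma^\circ_j$, whereas the paper reasons directly in the metric $d^1$.
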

\begin{proof}
	If $i \notin \union_{v \in S^\circ}N_v$, then $d^2(i, j) = d^1(\phi_i, j) = d^1(i, j)$.  Focus on some $v \in S^\circ$ and $i \in N_v$; thus $d^2(i, j) = d^1(v, j)$. Recall that every client $j$ is at some location in $S^\circ$ in the metric $d^1$. If a client $j \in C$ is at $v$, then $0 = d^1(v, j) \leq d^1(i, j)$; otherwise, we have $d^1(v, j) \leq d^1(v, i) + d^1 (i, j) \leq \frac{d^1(v, j)}{2} + d^1(i,j)$ by the definition of $\ell_v$ and $N_v$. This implies $d^1(v, j) \leq 2d^1(i, j)$.
	So, in any case we have $d^2(i, j) = d^1(v, j) \leq 2d^1(i, j)$.
\end{proof}

We can relate $\calI^1$ and $\calI^2$ in both directions.

\begin{lemma}
	\label{lemma:I1-to-I2}
	For every solution $(S, {\sigma})$ to $\calI^1$, there is a solution $(S', {\sigma}')$ to $\calI^2$ such that $\cost_{\calI^2}(S', \sigma') \leq \frac83\cost_{\calI^1}(S, \sigma)$.
\end{lemma}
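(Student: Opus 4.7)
The plan is to construct $(S',\sigma')$ from $(S,\sigma)$ by consolidating, for each $v\in S^\circ$ with $S_v:=S\cap N_v\ne\emptyset$, all facilities of $S$ lying inside $N_v$ into a single representative $i_v^\star\in S_v$: take $i_v^\star:=v$ whenever $v\in S$, and otherwise $i_v^\star\in\arg\min_{i\in S_v}d^1(v,i)$. Then set $S':=(S\setminus\bigcup_{v\in S^\circ} N_v)\cup\set{i_v^\star:S_v\ne\emptyset}\subseteq S$, and $\sigma'_j:=i_v^\star$ when $\sigma_j\in S_v$ (such a $v$ is unique since the $N_v$'s are disjoint), and $\sigma'_j:=\sigma_j$ otherwise.

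First I would verify $(S',\sigma')$ is valid for $\calI^2$. Facilities in $S\setminus\bigcup_v N_v$ retain their original clients, so their lower bounds are trivially preserved. Each representative $i_v^\star$ absorbs every client previously routed to any facility of $S_v$, so $\cardinal{\sigma'^{-1}(i_v^\star)}=\sum_{i\in S_v}\cardinal{\sigma^{-1}(i)}\ge\sum_{i\in S_v}B_i\ge B_{i_v^\star}$. Next, since every facility of $N_v$ is collocated with $v$ in the metric $d^2$, we have $d^2(i_v^\star,j)=d^1(v,j)=d^2(\sigma_j,j)$ for every $j$ with $\sigma_j\in S_v$; hence $\ccost_{\calI^2}(\sigma')=\ccost_{\calI^2}(\sigma)$, and Claim~\ref{claim:d1-to-d2} yields $\ccost_{\calI^2}(\sigma')\le 2\,\ccost_{\calI^1}(\sigma)$.

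The crux is controlling the extra facility cost $f^2(S')-f^1(S')=\tfrac{2}{3}\sum_{v:S_v\ne\emptyset}n_v\,d^1(v,i_v^\star)$. Since $S'\subseteq S$ we have $f^1(S')\le f^1(S)$, and for $v\in S$ the inner term vanishes as $i_v^\star=v$. For $v\notin S$ with $S_v\ne\emptyset$, the key inequality is $n_v\,d^1(v,i_v^\star)\le\sum_{j\in\sigma^{\circ-\!1}(v)}d^1(\sigma_j,j)$. To see it, each of the $n_v$ clients $j$ at $v$ satisfies $d^1(\sigma_j,j)=d^1(\sigma_j,v)$, and either $\sigma_j\in S_v$ (so $d^1(\sigma_j,v)\ge d^1(i_v^\star,v)$ by the minimality of $i_v^\star$) or $\sigma_j\notin N_v$ (so $d^1(\sigma_j,v)\ge\ell_v/2>d^1(v,i_v^\star)$, since $i_v^\star\in N_v$). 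Summing over $v$ gives $\tfrac{2}{3}\sum_{v\notin S}n_v\,d^1(v,i_v^\star)\le\tfrac{2}{3}\ccost_{\calI^1}(\sigma)$, and combining with the connection-cost bound yields
\begin{align*}
\cost_{\calI^2}(S',\sigma')\ \le\ f^1(S)+\tfrac{2}{3}\ccost_{\calI^1}(\sigma)+2\,\ccost_{\calI^1}(\sigma)\ \le\ \tfrac{8}{3}\cost_{\calI^1}(S,\sigma).
\end{align*}

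The main obstacle is precisely this facility-cost accounting. The naive choice $(S',\sigma')=(S,\sigma)$ fails: if many facilities of $S$ sit in $N_v$ at distances comparable to $\ell_v/2$, then $\tfrac{2}{3}n_v\sum_{i\in S_v}d^1(v,i)$ can dwarf the $\calI^1$ cost. Collapsing $S_v$ to a single representative $i_v^\star$, chosen closest to $v$ (or equal to $v$ whenever $v$ is already open, making the extra term vanish), is exactly what makes each surviving $n_v\,d^1(v,i_v^\star)$ term chargeable to the $\calI^1$ connection cost paid by the clients located at $v$.
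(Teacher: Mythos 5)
Your proof is correct and follows essentially the same approach as the paper's: for each $v$ you collapse all open facilities in $N_v$ to the one nearest $v$, observe that this preserves validity and doesn't change $\ccost_{\calI^2}$, and charge the new $\frac23 n_v d^1(v,i_v^\star)$ facility-cost terms to the $\calI^1$ connection cost of the clients at $v$. The only cosmetic difference is that you spell out the case analysis (client served inside $S_v$ vs.\ outside $N_v$) that underlies the paper's terser claim that $i_v^\star$ is the nearest open facility to $v$ in all of $S$.
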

\begin{proof}
	In the metric $d^2$, all facilities in $N_v$ are at $v$ for every $v \in S^\circ$.  Our solution $(S', {\sigma}')$ is obtained from $(S, {\sigma})$ by keeping only one open facility $i$ from $N_v$ with the smallest $d^1(v, i)$. Formally, let $(S',{\sigma}') = (S, {\sigma})$ initially and for every $v \in S^\circ$ we apply the following procedure.  If $|S' \cap N_v| \geq 2$ we then let $i$ be the facility in $S' \cap N_v$ with the smallest $d^1(v, i)$. We update $S' \gets S' \setminus N_v \cup \{i\}$, and for every $j \in C$ with $\sigma'_j \in N_v \setminus \{i\}$ we update $\sigma'_j$ to $i$.
	
	 The final solution $(S', {\sigma}')$ is valid to $\calI^2$; moreover, $\ccost_{\calI^2}({\sigma}') =\ccost_{\calI^2}({\sigma})$ since all facilities in $N_v$ are collocated at $v$ in the instance $\calI^2$.  By Claim~\ref{claim:d1-to-d2}, we have $\ccost_{\calI^2}({\sigma}') \leq 2\ccost_{\calI^1}(\sigma)$. We then consider the facility cost of solution $(S', {\sigma}')$.
	\begin{align*}
		f^2(S') - f^1(S') = \sum_{v \in S^\circ, i \in N_v \cap S'} (f^2_i - f^1_i) = \frac23\sum_{v \in S^\circ, i \in N_v \cap S'}n_v d^1(i, v) \leq \frac23\ccost_{\calI^1}({\sigma}).
	\end{align*}
	To see the inequality, recall that $|S' \cap N_v| \leq 1$ for every $v \in S^\circ$. Moreover,  if $i \in S' \cap N_v$, then $i$ is the nearest facility to $v$ in $S$ in the metric $d^1$. Thus, the $n_v$ clients at $v$ must have total connection cost at least $n_vd^1(i, v)$ in the solution $(S, {\sigma})$ to instance $\calI^1$. Thus,
	
	\begin{flalign*}
		&& \cost_{\calI^2}(S', {\sigma}') &= f^2(S') + \ccost_{\calI^2}(\sigma') \leq f^1(S') + \frac23\ccost_{\calI^1}({\sigma}) + 2\ccost_{\calI^1}({\sigma})&&\\
		&& &\leq f^1(S) + \frac83\ccost_{\calI^1}(\sigma) \leq \frac83\cost_{{{\calI^1}}}(S, {\sigma}). && \qedhere
	\end{flalign*}
\end{proof}

\begin{lemma}
	\label{lemma:I2-to-I1}
	Let $(S, {\sigma})$ be a valid solution to $\calI^2$. Then we have  $$		\cost_{\calI^1}(S, {\sigma}) \leq \frac{3}{2}\cost_{\calI^2}(S, {\sigma}).$$
\end{lemma}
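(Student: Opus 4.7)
The plan is to argue pointwise, client by client, that going from the metric $d^2$ back to $d^1$ either costs nothing extra, incurs at most a $3/2$ factor on the $d^2$-distance, or incurs extra connection cost that can be charged to the extra opening cost that $\calI^2$ imposes on facilities in $\bigcup_{v \in S^\circ} N_v$. Since $f^2_i \geq f^1_i$ for every $i \in F$, we have $f^1(S) \leq f^2(S)$ automatically, so the main task is to control $\ccost_{\calI^1}(\sigma)$ in terms of $\ccost_{\calI^2}(\sigma)$ and $f^2(S) - f^1(S)$.

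Fix $j \in C$, let $i = \sigma_j$, and let $r \in S^\circ$ be the location of $j$ (so $j$ is at $r$ in the metric $d^1$). I would split into three cases. First, if $i \notin \bigcup_{v \in S^\circ} N_v$, then $\phi_i = i$ and $d^1(i, j) = d^2(i, j)$ so nothing to do. Second, suppose $i \in N_v$ for some $v \in S^\circ$ and $r \neq v$. Then $d^2(i, j) = d^1(v, r) \geq \ell_v$, while the triangle inequality and $d^1(i, v) < \ell_v/2$ give
\begin{align*}
d^1(i, j) \leq d^1(i, v) + d^1(v, r) < \tfrac{\ell_v}{2} + d^1(v, r) \leq \tfrac{1}{2}d^1(v, r) + d^1(v, r) = \tfrac{3}{2}d^2(i, j).
\end{align*}
Third, if $i \in N_v$ and $r = v$, then $d^2(i, j) = 0$ but $d^1(i, j) = d^1(i, v)$; this is the excess I must charge to $f^2(S) - f^1(S)$.

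For the third case, I sum over all bad pairs: for each $v \in S^\circ$ and each $i \in S \cap N_v$, the clients at $v$ assigned to $i$ number at most $n_v$, so the total excess from such pairs is bounded by
\begin{align*}
\sum_{v \in S^\circ} \sum_{i \in S \cap N_v} n_v \, d^1(i, v) \;=\; \tfrac{3}{2} \sum_{v \in S^\circ} \sum_{i \in S \cap N_v} \tfrac{2}{3} n_v \, d^1(i, v) \;=\; \tfrac{3}{2}\bigl(f^2(S) - f^1(S)\bigr),
\end{align*}
by the definition of $f^2$. Combining the three cases yields
\begin{align*}
\ccost_{\calI^1}(\sigma) \;\leq\; \tfrac{3}{2}\,\ccost_{\calI^2}(\sigma) + \tfrac{3}{2}\bigl(f^2(S) - f^1(S)\bigr).
\end{align*}
Adding $f^1(S)$ to both sides and using $f^1(S) \leq f^2(S)$ then gives
\begin{align*}
\cost_{\calI^1}(S, \sigma) \leq f^1(S) + \tfrac{3}{2}\ccost_{\calI^2}(\sigma) + \tfrac{3}{2}f^2(S) - \tfrac{3}{2}f^1(S) \leq \tfrac{3}{2}\cost_{\calI^2}(S, \sigma),
\end{align*}
as desired. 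The only nontrivial step is the third case, where the factor $\tfrac{2}{3}$ in the definition of $f^2_i = f^1_i + \tfrac{2}{3} n_v d^1(v, i)$ is exactly calibrated so that the excess connection cost gets absorbed with the targeted $3/2$ blow-up, and everything else is a clean triangle inequality using $d^1(v, r) \geq \ell_v > 2 d^1(i, v)$ for $r \in S^\circ \setminus \{v\}$.
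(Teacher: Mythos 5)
Your proof is correct and follows essentially the same three-case decomposition and charging argument as the paper's, with the same calibration of the $\tfrac{2}{3}$ factor in $f^2$ to absorb the collocated-client excess. One small note: the last inequality actually relies on $f^1(S) \geq 0$ rather than on $f^1(S) \leq f^2(S)$ as stated, since $f^1(S) + \tfrac{3}{2}f^2(S) - \tfrac{3}{2}f^1(S) = \tfrac{3}{2}f^2(S) - \tfrac{1}{2}f^1(S)$.
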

\begin{proof}
	Focus on a client $j$ with $\sigma_j = i$. We consider 3 cases separately.
	\begin{itemize}
		\item If $i \notin \union_{v \in S^\circ}N_{v}$, then we have $d^1(i, j) = d^2(i, j)$. 
		\item $i \in N_{v}$ for some $v \in S^\circ$ and $j$ is not collocated with $v$ in the metric $d^1$. Then $d^1(i, j) \leq d^1(v, j) + d^1(i, v) \leq d^1(v, j) + \frac{d^1(v, j)}{2} = \frac{3d^1(v, j)}{2}$, by the fact that $j$ must be at a location in $S^\circ$ in the metric $d^1$, and the definition of $N_{v}$. Thus, $d^1(i, j) \leq \frac{3d^1(v, j)}{2}=\frac{3d^2(i, j)}2$ in this case.
 		\item $i \in N_{v}$ for some $v \in S^\circ$ and $j$ is collocated with $v$ in metric $d^1$. Then we have $d^1(i, j) = d^1(i, v)$ and $d^2(i, j) = 0$; but there are at most $n_{v}$ such clients. 
	\end{itemize}
	So, we have
	\begin{flalign*}
		&& \cost_{\calI^1}(S, {\sigma}) &= f^1(S) + \sum_{j \in C}d^1(j, \sigma_j) \leq f^1(S) + \frac{3}{2}\sum_{j \in C}d^2(j, \sigma_j) + \sum_{v \in S^\circ, i \in S \cap N_{v}}n_{v}d^1(i, v) &&\\
		&& & = \sum_{i \in S \setminus \union_{v \in S^\circ}N_v}f^1_i + \sum_{v \in S^\circ, i \in S \cap N_v}(f^1_i + n_v d^1(i,v)) + \frac{3}{2}\sum_{j \in C}d^2(j, \sigma_j) \\
		&& &= f^2\left(S \setminus \union_{v \in S^\circ}N_v\right) + \frac32f^2\left(S \cap \union_{v \in S^\circ}N_v\right) + \frac{3}{2}\ccost_{\calI^2}({\sigma}) &&\\
		&& &\leq \frac32f^2(S) + \frac{3}{2}\ccost_{\calI^2}({\sigma}) = \frac32\cost_{\calI^2}(S, \sigma). &&\qedhere
	\end{flalign*}
\end{proof}
	 
\begin{theorem}
	\label{thm:I1-and-I2}
	Given an $\alpha_2$-approximate solution to $\calI^2$, we can efficiently find an $(\alpha_1 = 4\alpha_2)$-approximate solution to $\calI^1$.
\end{theorem}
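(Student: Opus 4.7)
The plan is to chain together Lemmas~\ref{lemma:I1-to-I2} and~\ref{lemma:I2-to-I1}: the former controls the optimum of $\calI^2$ in terms of the optimum of $\calI^1$, and the latter converts any solution of $\calI^2$ back to one of $\calI^1$ (at the same $(S,\sigma)$, since $\calI^1$ and $\calI^2$ share the same $F$, $C$ and $B$, so validity is preserved).

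\medskip

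\noindent\textbf{Steps.} Let $(S^*_1,\sigma^*_1)$ denote an optimum solution to $\calI^1$. First I would invoke Lemma~\ref{lemma:I1-to-I2} applied to $(S^*_1,\sigma^*_1)$ to obtain a solution $(S',\sigma')$ to $\calI^2$ with
\[
\cost_{\calI^2}(S',\sigma')\ \leq\ \tfrac{8}{3}\,\cost_{\calI^1}(S^*_1,\sigma^*_1),
\]
which in particular gives $\mathrm{OPT}(\calI^2)\leq \tfrac{8}{3}\mathrm{OPT}(\calI^1)$. Next, let $(S,\sigma)$ be the given $\alpha_2$-approximate solution to $\calI^2$, so that
\[
\cost_{\calI^2}(S,\sigma)\ \leq\ \alpha_2\,\mathrm{OPT}(\calI^2)\ \leq\ \tfrac{8\alpha_2}{3}\,\cost_{\calI^1}(S^*_1,\sigma^*_1).
\]
Finally I would apply Lemma~\ref{lemma:I2-to-I1} to the same pair $(S,\sigma)$ (which is automatically a valid solution to $\calI^1$) to conclude
\[
\cost_{\calI^1}(S,\sigma)\ \leq\ \tfrac{3}{2}\,\cost_{\calI^2}(S,\sigma)\ \leq\ \tfrac{3}{2}\cdot\tfrac{8\alpha_2}{3}\,\cost_{\calI^1}(S^*_1,\sigma^*_1)\ =\ 4\alpha_2\,\cost_{\calI^1}(S^*_1,\sigma^*_1),
\]
so that $(S,\sigma)$ is an $\alpha_1$-approximate solution to $\calI^1$ with $\alpha_1 = 4\alpha_2$, as required.

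\medskip

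\noindent\textbf{Where the work lies.} There is essentially no obstacle here: the two directions of the reduction have already been carried out in Lemmas~\ref{lemma:I1-to-I2} and~\ref{lemma:I2-to-I1}, and the theorem is just the composition of the two bounds $\tfrac{8}{3}$ and $\tfrac{3}{2}$. The only small point worth flagging explicitly is that $\calI^1$ and $\calI^2$ share the same facility set $F$, client set $C$, and lower bound vector $B$, so ``valid solution'' means the same thing in both instances and no rerouting is needed when transferring $(S,\sigma)$ from $\calI^2$ back to $\calI^1$; all the geometric work of moving facilities in $N_v$ to $v$ (and the associated cost accounting) has already been absorbed into the two lemmas.
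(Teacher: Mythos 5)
Your proof is correct and follows essentially the same route as the paper: apply Lemma~\ref{lemma:I1-to-I2} to the optimum of $\calI^1$ to bound $\mathrm{OPT}(\calI^2)$ by $\tfrac{8}{3}\mathrm{OPT}(\calI^1)$, then apply Lemma~\ref{lemma:I2-to-I1} to the given $\alpha_2$-approximate solution of $\calI^2$ (which is also a valid solution to $\calI^1$ since $F$, $C$, $B$ coincide), and multiply $\tfrac{8}{3}\cdot\tfrac{3}{2}=4$. Nothing is missing.
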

\begin{proof}
	Let $(S^{*1}, {\sigma}^{*1})$ be the optimum solution to the instance $\calI^1$.  By Lemma~\ref{lemma:I1-to-I2}, there exists a solution $(S, {\sigma})$ such that $\cost_{\calI^2}(S, {\sigma}) \leq \frac83\cost_{{{\calI^1}}}(S^{*1}, {\sigma}^{*1})$. Then any $\alpha_2$-approximate solution $(S', {\sigma}')$ to $\calI^2$ has $\cost_{\calI^2}(S', {\sigma}') \leq \frac{8\alpha_2}3\cost_{{{\calI^1}}}(S^{*1}, {\sigma}^{*1})$. Then by Lemma~\ref{lemma:I2-to-I1}, we have $\cost_{\calI^1}(S', {\sigma}') \leq \frac32\cost_{\calI^2}(S', \sigma') \leq 4\alpha_2\cost_{\calI^1}(S^{*1}, {\sigma}^{*1})$. Thus, we have found an $(\alpha_1 = 4\alpha_2)$-approximate solution to $\calI^1$.
\end{proof} 

Thus, it suffices for us to find an $\alpha_2$-approximate solution to the LBFL instance $\calI^2 = (F, C, d^2,  f^2,  B)$, whose properties are summarized here. In the instance,  there is a set $S^\circ \subseteq F$ of locations where all clients are located. For every $v \in S^\circ$, 
\begin{itemize}[topsep=3pt,itemsep=0pt]
	\item the opening cost of $v$ is $f^2_v = 0$,
	\item the set $\sigma^{\circ-\!1}(v)$ of clients are located at $v$, and $n_v = |\sigma^{\circ-\!1}(v)| \geq {\beta} B_v$,
	\item the set $N_v \subseteq F$ of facilities are located at $v$.
\end{itemize}
Moreover, for every facility $i \in F \setminus \union_{v \in S^\circ}N_v$, we have $d^2(i, v) \geq \ell_v/2$ for every $v \in S^\circ$; recall that $\ell_v = d^2(v, S^\circ \setminus \{v\}) = d^1(v, S^\circ \setminus\{v\})$ is the distance from $v$ to its nearest neighbor in $S^\circ$.

	\subsection{Constructing Instance $\calI^3$ of LBFL with Penalty}
	\label{subsec:LBFL-P}
	
	To convert the LBFL instance $\calI^2$ to a CFL instance, we construct an intermediate instance $\calI^3$, which has the same input as $\calI^2$. However, in $\calI^3$ we do not require all clients to be connected; instead we penalize locations in $S^\circ$ with no open facilities. 
	
	Formally, in the instance $\calI^3$, we have $F, C, d^2,  f^2,  B, S^\circ, \sigma^\circ, (n_v)_{v \in S^\circ}, (N_v)_{v \in S^\circ}, (\ell_v)_{v \in S^\circ}$ as defined in $\calI^2$, and they satisfy the same properties as they do in $\calI^2$.  The output of the problem is a pair $(S, {\sigma})$ where $S \subseteq F$ and $\sigma \in (S \cup \{\bot\})^C$ is the connection vector, and $\sigma_j = \bot$ indicates that $j$ is not connected. The lower bounds of open facilities need to be respected, namely, for every $i \in S$, we require $|\sigma^{-1}(i)| \geq B_i$.  The facility cost of the solution is $f^2(S)$ and the connection cost of the solution is $\ccost_{\calI^3}(S, {\sigma}) := \sum_{j \in C:\sigma_j \neq \bot}d^2(j, \sigma_j)$. In addition, we impose a penalty cost 
	$$\pcost_{\calI^3}(S) := \frac{2\beta-1}{2\beta^2}\sum_{v \in S^\circ:S \cap N_v = \emptyset}n_v \ell_v.$$ Namely, for any $v \in S^\circ$ such that no facility in $N_v$ is open, we need to pay a penalty of $\frac{2\beta-1}{2\beta^2}n_v\ell_v$. Then the overall cost of the solution $(S, \sigma)$ is $\cost_{\calI^3}(S, \sigma) := f^2(S) + \ccost_{\calI^3}(\sigma) + \pcost_{\calI^3}(S)$. The goal of the problem is to find a solution $(S, \sigma)$ with the minimum cost. We call $\calI^3$ a LBFL with penalty (LBFL-P) instance.

	Notice that in a solution $(S, \sigma)$ to $\calI^3$, there is no need to open a facility outside $\union_{v \in S^\circ}N_v$. If such a facility is open, we can simply shut it down and disconnect all its connected clients. This only decreases the facility and connection costs, and does not affect the penalty cost.  Also, we only need to open at most one facility in $N_v$ for any $v \in S^\circ$.
	
	One direction of the relationship between $\calI^2$ and $\calI^3$ is straightforward:
	\begin{claim}
		\label{claim:I2-to-I3}
		Let $\left(S, {\sigma}\right)$ be a solution to the LBFL instance ${\calI^2} = (F, C, d^2,  f^2,  B)$.  Then, $\left(S, {\sigma} \right)$ is also a valid solution to the LBFL-P instance $\calI^3$.  Moreover, we have $\cost_{{{\calI^3}}}\left(S, {\sigma} \right) \leq \left(1+\frac{2\beta-1}{\beta^2}\right)\cost_{{\calI^2}}\left(S, {\sigma}\right)$.
	\end{claim}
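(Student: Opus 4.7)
The plan is to observe that $(S,\sigma)$ is automatically a valid LBFL-P solution (since every open $i \in S$ already satisfies $|\sigma^{-1}(i)| \ge B_i$ in $\calI^2$, and the LBFL-P formulation is strictly more permissive because it allows disconnected clients), and then to simply compare the three cost terms. The facility cost $f^2(S)$ is identical in the two instances, and because $\sigma$ connects every client, $\ccost_{\calI^3}(\sigma)$ coincides with $\ccost_{\calI^2}(\sigma)$. Hence the only thing to control is the additive penalty term $\pcost_{\calI^3}(S) = \frac{2\beta-1}{2\beta^2}\sum_{v \in S^\circ: S\cap N_v = \emptyset} n_v \ell_v$, which must be shown to be at most $\frac{2\beta-1}{\beta^2}\,\cost_{\calI^2}(S,\sigma)$.

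The core step, and essentially the only nontrivial calculation, is to charge $\sum_{v: S\cap N_v = \emptyset} n_v \ell_v$ against $\ccost_{\calI^2}(\sigma)$. Fix such a $v$. The $n_v$ clients located at $v$ are each connected by $\sigma$ to some facility $i \in S$; since $S \cap N_v = \emptyset$, we have $i \notin N_v$. By the structural property recorded at the end of Section~\ref{subsec:aggregating-facilities} (namely that, in the metric $d^2$, every facility lying outside $N_v$ — whether it is some free-standing facility not in any $N_{v'}$, or a facility in $N_{v'}$ for $v' \ne v$, which has been collocated at $v'$ — sits at distance at least $\ell_v/2$ from $v$), each such connection costs at least $\ell_v/2$. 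Summing over the $n_v$ clients at $v$ and then over all the penalized locations gives $\ccost_{\calI^2}(\sigma) \ge \frac12 \sum_{v: S\cap N_v=\emptyset} n_v\ell_v$, i.e.\ $\sum_{v: S\cap N_v=\emptyset} n_v\ell_v \le 2\,\ccost_{\calI^2}(\sigma)$.

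Plugging this into the penalty expression yields $\pcost_{\calI^3}(S) \le \frac{2\beta-1}{\beta^2}\,\ccost_{\calI^2}(\sigma) \le \frac{2\beta-1}{\beta^2}\,\cost_{\calI^2}(S,\sigma)$, and adding the unchanged facility and connection costs gives $\cost_{\calI^3}(S,\sigma) \le \bigl(1 + \tfrac{2\beta-1}{\beta^2}\bigr)\,\cost_{\calI^2}(S,\sigma)$, as required. There is no real obstacle here; the only thing to be careful about is the case analysis for the lower bound $d^2(i,v) \ge \ell_v/2$ when $i$ could lie in a different $N_{v'}$, but this is immediate from the definition of $d^2$ via the map $\phi$ together with $d^1(v',v) \ge \ell_v$.
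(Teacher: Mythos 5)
Your proposal is correct and takes essentially the same approach as the paper: both argue validity is immediate, that $f^2(S)$ and $\ccost$ carry over unchanged, and then bound $\pcost_{\calI^3}(S)$ by charging each $n_v\ell_v$ term (for penalized $v$) against the $\geq n_v \cdot \ell_v/2$ connection cost incurred by the $n_v$ clients at $v$, since every open facility lies outside $N_v$ and hence at $d^2$-distance at least $\ell_v/2$ from $v$. The one case you flagged — that this distance bound also holds for facilities in some $N_{v'}$ with $v' \neq v$ — is a point the paper glosses over, and your justification (such a facility is collocated with $v'$ in $d^2$, and $d^1(v',v) \geq \ell_v$) is exactly right.
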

	\begin{proof}
		$(S, {\sigma})$ is clearly a valid solution to ${{\calI^3}}$. We bound the penalty cost of the solution to $\calI^3$: 
		\begin{align*}
			\pcost_{\calI^3}(S)&= \frac{2\beta-1}{2\beta^2}\sum_{v \in S^\circ: N_v \cap S = \emptyset}n_v \ell_v \leq \frac{2\beta-1}{2\beta^2}\sum_{v \in {S^\circ}: N_v \cap S = \emptyset}\sum_{j \in \sigma^{\circ-\!1}(v)} 2d^2(j, \sigma_j) \\
			&\leq \frac{2\beta-1}{\beta^2}\sum_{j \in C}d^2(j, \sigma_j) = \frac{2\beta-1}{\beta^2}\cdot\ccost_{\calI^2}(\sigma).
		\end{align*}
		To see the first inequality in the above sequence, notice that for every $v \in S^\circ$, all facilities outside $N_v$ have distance at least $\ell_v/2$ to $v$. Thus, if no facility in $N_v$ is open, then the connection cost of every client $j \in \sigma^{\circ-\!1}(v)$ is at least $\ell_v/2$.
		\begin{flalign*}
			\text{So}, && \cost_{\calI^3}(S, {\sigma}) &= f^2(S) + \ccost_{\calI^3}(\sigma) + \pcost_{\calI^3}(S) \leq f^2(S) + \ccost_{\calI^2}(\sigma) + \frac{2\beta-1}{\beta^2}\cdot \ccost_{\calI^2}(\sigma)&&\\
			&& &\leq \left(1+\frac{2\beta-1}{\beta^2}\right)\cost_{\calI^2}(S, {\sigma}). &&\qedhere
		\end{flalign*}
	\end{proof}
			
	The proof of the other direction of the relationship is more involved. Given a valid solution $(S, {\sigma})$ to $\calI^3$, we need to obtain a valid solution to $\calI^2$ of small cost, by connecting the unconnected clients in $\sigma$. We show that the incurred connection cost  can be bounded using the connection and penalty cost of the solution $(S, {\sigma})$ to $\calI^3$; this procedure is very similar to that in \cite{Svi10}.
		
	\begin{lemma}
		\label{lemma:I3-to-I2}
		Suppose we are given a solution $(S, {\sigma})$ to the LBFL-P instance ${\calI^3}$. Then we can efficiently construct a solution $(S', \sigma')$ to the LBFL instance ${\calI^2}$ such that $\ccost_{\calI^2}(S', \sigma') \leq \frac{2\beta}{2\beta-1}\cost_{{{\calI^3}}}(S, {\sigma})$.
	\end{lemma}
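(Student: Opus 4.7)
The plan is to convert $(S, {\sigma})$ into a valid solution $(S', {\sigma}')$ to $\calI^2$ by (i) reassigning clients at each $v \in S^\circ_\open := \set{v \in S^\circ : N_v \cap S \neq \emptyset}$ to an open facility $i_v \in N_v \cap S$ at no additional cost (since $d^2(i_v, j) = 0$ for every client $j$ originally at $v$), and (ii) routing the remaining unconnected clients, which all lie at locations in $S^\circ_\closed := S^\circ \setminus S^\circ_\open$, along a forest over $S^\circ$, opening extra facilities at closed locations for free. Following \cite{Svi10}, I build a graph on $S^\circ$ where every $v \in S^\circ_\closed$ has an outgoing edge of length $\ell_v$ to its nearest neighbor in $S^\circ \setminus \set{v}$. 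Every connected component is then either a tree rooted at some node in $S^\circ_\open$, or a ``rho''-shaped component whose unique directed cycle lies entirely in $S^\circ_\closed$. For each rho-component, I would break one edge of the cycle at a chosen node $u^\star$ having the smallest $B_v$ among cycle nodes, and designate $u^\star$ as a forced-open root.

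The routing then proceeds in post-order. At each non-root closed node $u$, maintain an accumulator $A_u$ equal to the unconnected clients originally at $u$ plus those forwarded from the children of $u$. If $A_u \geq B_u$, open $u$ (free, since $f^2_u = 0$), assign all $A_u$ clients to $u$, and forward nothing; otherwise forward all $A_u$ clients along the edge of length $\ell_u$ to the parent. At a root $r \in S^\circ_\open$, arriving clients are absorbed by $i_r$ at zero additional cost. At a forced-open root $u^\star$ of a rho-component, open $u^\star$ and absorb all arrivals; feasibility uses that the cycle has at least two nodes and $n_v \geq \beta B_v$ with $\beta > 1/2$, so the mass reaching $u^\star$ is at least $\beta \cdot |\mathrm{cycle}| \cdot B_{u^\star} \geq B_{u^\star}$, provided the greedy opens along the cycle path to $u^\star$ are suppressed if they would otherwise starve the root.

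For the cost accounting, every newly opened facility sits at a location in $S^\circ$ with $f^2 = 0$, so $f^2(S') = f^2(S)$. The extra connection cost over $\sigma$ equals $\sum_u m_u \ell_u$, summed over the forwarded edges $(u, p(u))$, where $m_u$ is the number of clients routed on that edge. The key claim is $m_u \leq n_u/\beta$ whenever $m_u > 0$: in that case $u$ was not opened, so $m_u = A_u < B_u$, and the bi-criteria guarantee $n_u \geq \beta B_u$ yields $m_u < n_u/\beta$. Summing, the extra connection cost is at most $\frac{1}{\beta}\sum_{v \in S^\circ_\closed} n_v \ell_v = \frac{2\beta}{2\beta-1}\pcost_{\calI^3}(S)$, so
\begin{align*}
\cost_{\calI^2}(S', {\sigma}') \leq f^2(S) + \ccost_{\calI^3}({\sigma}) + \frac{2\beta}{2\beta-1}\pcost_{\calI^3}(S) \leq \frac{2\beta}{2\beta-1}\cost_{\calI^3}(S, {\sigma}),
\end{align*}
using $\frac{2\beta}{2\beta-1} > 1$. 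The main obstacle I anticipate is the rho-component case: ensuring that the forced-open root always accumulates enough clients to meet its lower bound while simultaneously preserving the $m_u \leq n_u/\beta$ bound on every intermediate edge requires a careful interaction between the choice of $u^\star$, the breaking of the cycle, and the decision of when to invoke the greedy open rule along the cycle path to $u^\star$.
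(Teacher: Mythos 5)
Your proposal follows the paper's skeleton closely (forest of $\pi_v$ edges, post-order greedy opens, charging moving cost against penalty via $m_u < B_u \le n_u/\beta$), but the ``rho''-component case — which you yourself flag as the main obstacle — is precisely where your argument has a genuine gap, and the paper resolves it by a mechanism you have not anticipated.

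The feasibility claim for the forced-open root $u^\star$ is incorrect. The mass arriving at $u^\star$ consists only of the \emph{unconnected} clients $n'_v = |\sigma^{\circ-1}(v) \cap \sigma^{-1}(\bot)|$ flowing up the tree, not all $n_v$ clients. Nothing prevents $n'_v$ from being tiny or zero at every cycle node — many clients at closed locations may already be connected to facilities elsewhere in $\sigma$ at a connection cost that $\ccost_{\calI^3}$ already pays for. So the inequality ``mass $\ge \beta\cdot|\mathrm{cycle}|\cdot B_{u^\star}$'' does not hold: the right-hand side lower-bounds $\sum_{\mathrm{cycle}} n_v$, not $\sum_{\mathrm{cycle}} n'_v$. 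Your proposed patch (``suppress greedy opens that would starve the root'') also fails: if you suppress an open at $u$ with $A_u \ge B_u$ and forward $A_u$ clients instead, $A_u$ is no longer bounded by $B_u$, which destroys the $m_u \le n_u/\beta$ bound that your cost accounting relies on.

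The paper's fix is different. It first uses consistent tie-breaking so every cycle has exactly two nodes $r, r'$ (in a longer cycle the $\pi$-distances are forced equal, so tie-breaking shortens it), breaks it at the node $r$ with $B_r \le B_{r'}$, and processes the tree rooted at $r$. When $r$ ends up with $\bar n_r < B_r$ unconnected clients, it does \emph{not} try to force-open $r$: it checks whether $r'$ happens to be open (then route there, cost $\le \frac{n_r \ell_r}{\beta}$); failing that, it routes the $\bar n_r < B_r$ clients to the nearest genuinely open location $v^*$. The key observation that makes the cost bounded is that $\sigma^{\circ-1}(r) \cup \sigma^{\circ-1}(r')$ must contain at least $n_r + n_{r'} - \bar n_r \ge 2\beta B_r - B_r = (2\beta-1) B_r$ \emph{connected} clients, each paying $\ge d^2(v^*,\{r,r'\})$ in $\ccost_{\calI^3}(\sigma)$; this charges the term $B_r\, d^2(v^*,\{r,r'\})$ at a factor $\frac{1}{2\beta-1}$ against $\ccost_{\calI^3}(\sigma)$. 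That extra $\frac{1}{2\beta-1}\ccost_{\calI^3}(\sigma)$ term is absent from your final bound, and without it the argument does not go through — so your accounting is not just missing a constant, it is relying on a feasibility claim that is false.
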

	
	\begin{proof}
		As discussed, we can assume $S \subseteq \union_{v \in S^\circ}N_v$ and $|S \cap N_v| \leq 1$ for every $v \in S^\circ$.  We need to show how to connect the clients in $\sigma^{-1}(\bot)$ at a small cost.  For every $v \in S^\circ$, let $n'_v = \big|\sigma^{\circ-\!1}(v) \cap \sigma^{-1}(\bot)\big|$ be the number of unconnected clients at $v$. Let ${S^\circ_\open} = \set{v \in {S^\circ}: S \cap N_v \neq \emptyset}$ be the set of locations in $S^\circ$ with open facilities and ${S^\circ_\closed} = {S^\circ} \setminus {S^\circ_\open}$ be the set of locations in $S^\circ$ without open facilities.  W.l.o.g, we assume $n'_v = 0$ for every $v \in S^\circ_\open$, since there is an open facility at every $v$. %
		
		During our process, we keep a number $\bar n_v$ of unconnected clients at $v$ for every $v\in S^\circ$; initially, $\bar n_v = n'_v$.  We then move these clients within $S^\circ$ by updating the $\bar n_v$ values accordingly. In the end, we guarantee that for every $v \in S^\circ_\closed$, we have either $\bar n_v \geq B_v$, in which case we open the free facility $v$, or $\bar n_v = 0$.  If $\bar n_v > 0$ for some $v \in S^\circ_\open$, we can simply connect the $\bar n_v$ clients to the open facility at $v$. Since the  newly open facilities are free, it suffices for us to bound the total moving distance of all clients.

		For every $v \in {S^\circ_\closed}$, we define $\pi_v = \arg\min_{v' \in {S^\circ} \setminus \{v\}}d^2(v, v')$ to be the nearest neighbour of $v$ in ${S^\circ}$; thus, $\ell_v = d^2(v, \pi_v)$. Define a directed graph $G$ as $({S^\circ}, \set{(v, \pi_v): v \in {S^\circ_\closed}})$. Then each weakly-connected component in $G$ is, 
		\begin{enumerate}[label=(\roman*)]
			\item either a directed tree with edges directed to the root,
			\item or (i) plus an edge from the root to some other vertex in the component.
		\end{enumerate}
		Case (ii) can be viewed as a directed tree with root being a cycle.  Using a consistent way to break ties, we can assume that the cycle contains exactly two edges.  For case (i), the root is in ${S^\circ_\open}$ and all other vertices in the component are in ${S^\circ_\closed}$; for case (ii), all vertices in the component are in ${S^\circ_\closed}$.   The two cases are depicted in Figure~\ref{fig:moving-tree}.
		\begin{figure}[h]
			\centering
			\includegraphics[width=0.65\textwidth]{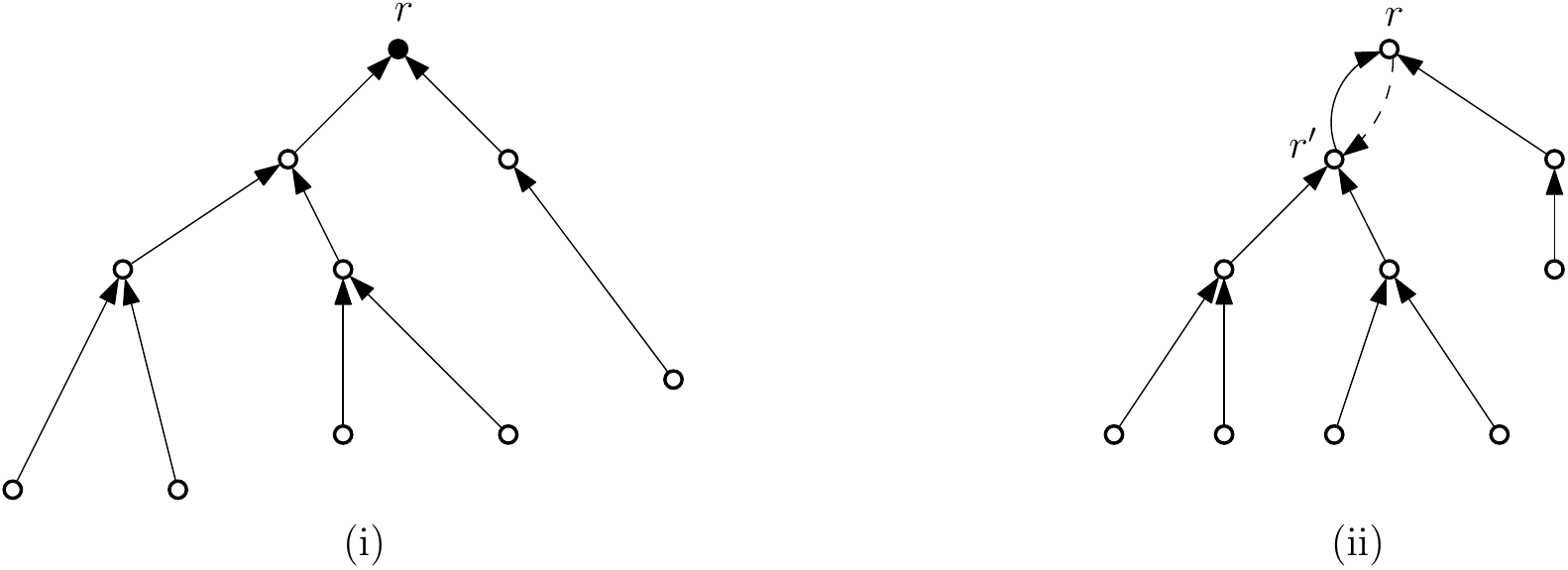}
			\caption{Two types of weakly-connected components in $G$. Every edge is from a vertex $v \in S^\circ_{\closed}$ to $\pi_v$. Empty and solid circles denote vertices in $S^\circ_\closed$ and $S^\circ_\open$ respectively.} \label{fig:moving-tree}
		\end{figure}

		To move the unconnected facilities, we handle each weakly-connected component of $G$ separately, in arbitrary order.  We now focus on a weakly-connected component of $G$. We first consider the simpler case (i), i.e, the component is a rooted tree. For every non-root vertex $v$ in the tree, from bottom to top, we perform the following operation. If $\bar n_v \geq B_{v}$, we open the free facility $v$. Otherwise, we move all the $\bar n_v$ unconnected clients at $v$ to the parent $\pi_v$ of $v$ in the tree. That is, we increase $\bar n_{\pi_v}$ by $\bar n_v$ and then change $\bar n_v$ to $0$.  The root $r$ of the component is in $S^\circ_\open$ and thus we can connect the $\bar n_r$ clients to the open facility at $r$.
		
		We then consider the more complicated case (ii). Suppose the length-2 cycle in the component is on two vertices $r, r' \in {S^\circ_\closed}$. By renaming we assume $B_{r} \leq B_{{r'}}$. By ignoring the edge from $r$ to $r'$, the component becomes a directed tree rooted at $r$. We then run the above procedure as for case (i). In the end, we shall handle the root $r$ as follows. If we have $\bar n_r \geq B_{r}$ then we open $r$; otherwise if ${r'}$ is open, we move the $\bar n_r$ clients at $r$ to $r'$. If neither condition holds, then we consider the vertex $v^* \in S^\circ_\open$ that is nearest to $\set{r, r'}$ and we move the $\bar n_r$ clients at $r$ to $v^*$. 
		
	 	It is easy to see after the moving process, we have for every $v \in {S^\circ_\closed}$, either $v$ is not open and $\bar n_v = 0$, or $v$ is open and $\bar n_v \geq B_{v}$. It remains to bound the moving cost incurred during the stage. Focus on a component in $G$ and again consider case (i) first. In this case, the number of clients moved along the edge $(v, \pi_v)$ is at most $B_{v} \leq \frac{n_v}{{{\beta}}}$, for every non-root $v$. Since $d^2(v, \pi_v) = \ell_v$, the total moving cost for these clients is at most $\frac{n_v\ell_v}{{{\beta}}}$.  
		
		For case (ii), we can also bound the cost of moving clients along an edge $(v, \pi_v)$ for any $v \neq r$ by $\frac{n_v\ell_v}{{{\beta}}}$. But additionally we need to consider the last step where we handle the root $r$. Consider the time point before we handle $r$. If we have $\bar n_r \geq B_{r}$ then no moving cost is incurred. Otherwise if ${r'}$ is open then we moved the clients from $r$ to $r'$ and the moving cost is $\bar n_r d(r, r') \leq B_{r}\ell_r \leq \frac{n_r\ell_r}{{{\beta}}}$.  
		
		It remains to consider the situation where  $\bar n_r < B_{r}$  and ${r'}$ is not open.  Notice that in this case the $n'_{r'} + n'_r$ unconnected clients  $(\sigma^{\circ-\!1}_r \cup \sigma^{\circ-\!1}_{r'}) \cap \sigma^{-1}(\bot)$ have been moved to $r$. So, we have that $B_{r} > \bar n_r \geq n'_r + n'_{r'}$.  Also $n_r + n_{r'} \geq {{\beta}} B_{r} + {{\beta}} B_{{r'}} \geq 2{{\beta}} B_{r}$, by our choice of $r$. So, there must be at least $n_r + n_{r'} - n'_r - n'_{r'} \geq  2{{\beta}} B_{r} - B_{r} = (2{{\beta}} - 1)B_{r}$ connected clients in $\sigma^{\circ-\!1}_r \cup \sigma^{\circ-\!1}_{r'}$ in $\sigma$. The connection cost of these clients in $\sigma$ is
		\begin{align*}
			\sum_{j \in \sigma^{\circ-\!1}_r \cup \sigma^{\circ-\!1}_{r'} \setminus \sigma^{-1}(\bot)} d^2(j, \sigma_j) \geq  (2{{\beta}}-1)B_{r}\cdot d^2(v^*, \set{r, r'}),
		\end{align*}
		since $v^*$ is the location containing the nearest facility to $\{r, r'\}$ in $S$ .
		The cost of moving the $\bar n_r < B_{r}$ clients from $r$ to $v^*$ is at most
		\begin{align}
			B_{r} d^2(r, v^*) &\leq B_{r} \left(d^2(v^*, \set{r, r'}) + \ell_r\right)
			\leq \frac{1}{2{{\beta}} - 1}\sum_{j \in \sigma^{\circ-\!1}_r \cup \sigma^{\circ-\!1}_{r'} \setminus \sigma^{-1}(\bot)} d^2(j, \sigma_j)  + \frac{n_r\ell_r}{{{\beta}}}. \label{inequ:moving-cost-for-r}
		\end{align}
		
		
		We can now bound the total moving cost of the unconnected clients. For every non-root $v$ in any component,  the cost for moving clients along edge $(v, \pi_v)$ is at most $\frac{n_v\ell_v}{{{\beta}}}$.  To handle the root $r$ for case (ii), the moving cost is most the right side of \eqref{inequ:moving-cost-for-r}. Thus, the total moving cost is at most
		\begin{align*}
			&\quad\frac{1}{{{\beta}}} \sum_{v \in {S^\circ_\closed}} n_v\ell_v + \frac{1}{2{{\beta}} - 1}\sum_{r, r'}\sum_{j \in \sigma^{\circ-\!1}_r \cup \sigma^{\circ-\!1}_{r'} \setminus \sigma^{-1}(\bot)} d^2(j, \sigma_j)\\
			&\leq \frac{1}{{{\beta}}}\sum_{v \in {S^\circ_\closed}}n_v\ell_v + \frac{1}{2{{\beta}} - 1}\sum_{j \in C \setminus \sigma^{-1}(\bot)} d^2(j, \sigma_j)
			= \frac{2\beta}{{2\beta-1}}\pcost_{\calI^3}(S) + \frac{1}{2{\beta} - 1}\ccost_{\calI^3}(\sigma).
		\end{align*}
		where $(r, r')$ is over all the length-2 cycles in components of case (ii).
		
		The process gives us the final solution $(S', \sigma')$ to $\calI^2$. We have that 
		\begin{flalign*}
			&& \cost_{\calI^2}(S', \sigma')&\leq f^2(S) + \ccost_{\calI^3}(\sigma) + \frac{2\beta}{2\beta-1}\pcost_{\calI^3}(S) + \frac{1}{2{{\beta}}-1}\ccost_{\calI^3}(\sigma) && \\
			&& &= f^2(S) + \frac{2\beta}{2\beta-1}\ccost_{\calI^3}(\sigma) + \frac{2\beta}{2\beta-1}\pcost_{\calI^3}(S) &&\\
			&& &\leq \frac{2{\beta}}{2{\beta}-1}\cost_{{{\calI^3}}}(S, {\sigma}). &&\qedhere
		\end{flalign*} 
	\end{proof}
	
	With the two lemmas, we can reduce the instance $\calI^2$ to $\calI^3$. 
	\begin{theorem}
		\label{thm:I2-and-I3}
		Given an $\alpha_3$-approximate solution to $\calI^3$, we can efficiently find an $\alpha_2$-approximate solution to $\calI^2$, where $\alpha_2 =\left(\frac{2\beta}{2\beta-1} + \frac{2}{\beta}\right)\alpha_3$.
	\end{theorem}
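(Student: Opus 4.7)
The plan is a direct chaining of the two results immediately preceding the theorem, with no new ideas needed: Claim~\ref{claim:I2-to-I3} converts $\calI^2$-solutions to $\calI^3$-solutions (losing a factor of $1+\tfrac{2\beta-1}{\beta^2}$), and Lemma~\ref{lemma:I3-to-I2} converts $\calI^3$-solutions to $\calI^2$-solutions (losing a factor of $\tfrac{2\beta}{2\beta-1}$). The theorem is pure accounting over this round trip.

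Concretely, first I would let $(S^{*2},\sigma^{*2})$ denote an optimum solution to $\calI^2$; by Claim~\ref{claim:I2-to-I3}, the same pair is a valid solution to $\calI^3$ whose $\calI^3$-cost is at most $\left(1+\tfrac{2\beta-1}{\beta^2}\right)\cost_{\calI^2}(S^{*2},\sigma^{*2})$, and hence the $\calI^3$-optimum obeys the same bound. Next I would invoke the assumed $\alpha_3$-approximation on $\calI^3$ to obtain a solution $(S,\sigma)$ with $\cost_{\calI^3}(S,\sigma)$ at most $\alpha_3$ times this quantity, and feed $(S,\sigma)$ into the conversion procedure of Lemma~\ref{lemma:I3-to-I2} to produce a valid solution $(S',\sigma')$ to $\calI^2$ with
\[
\cost_{\calI^2}(S',\sigma') \;\leq\; \tfrac{2\beta}{2\beta-1}\cost_{\calI^3}(S,\sigma) \;\leq\; \tfrac{2\beta}{2\beta-1}\cdot\alpha_3\left(1+\tfrac{2\beta-1}{\beta^2}\right)\cost_{\calI^2}(S^{*2},\sigma^{*2}).
\]
Finally I would simplify using $\tfrac{2\beta}{2\beta-1}\cdot\tfrac{2\beta-1}{\beta^2}=\tfrac{2}{\beta}$, so the multiplicative blow-up collapses to exactly $\tfrac{2\beta}{2\beta-1}+\tfrac{2}{\beta}$, giving the claimed value of $\alpha_2$.

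There is essentially no obstacle in this step, since the two bridging statements have already absorbed all the work (Claim~\ref{claim:I2-to-I3} bounds the $\calI^3$-penalty of an $\calI^2$-solution by its connection cost, and Lemma~\ref{lemma:I3-to-I2} builds the forest of trees over $S^\circ$ that funnels unconnected clients to a free collocated facility or to $v^*$). The only bookkeeping subtlety is to apply Lemma~\ref{lemma:I3-to-I2} to the \emph{algorithmically returned} $\alpha_3$-approximate solution to $\calI^3$, not to the $\calI^3$-optimum — which is precisely the quantifier structure of the theorem statement and of the lemma's conclusion.
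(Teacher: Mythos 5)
Your proposal is correct and matches the paper's proof exactly: both let $(S^{*2},\sigma^{*2})$ be an optimum for $\calI^2$, invoke Claim~\ref{claim:I2-to-I3} to bound the $\calI^3$-optimum, apply the $\alpha_3$-approximation to $\calI^3$, and then use Lemma~\ref{lemma:I3-to-I2} on the returned solution, with the same algebraic simplification $\frac{2\beta}{2\beta-1}\bigl(1+\frac{2\beta-1}{\beta^2}\bigr)=\frac{2\beta}{2\beta-1}+\frac{2}{\beta}$.
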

	\begin{proof}
		Let $(S^{*2}, \sigma^{*2})$ be the optimum solution to $\calI^2$. By Claim~\ref{claim:I2-to-I3}, we have that $\cost_{\calI^3}(S^{*2}, \sigma^{*2}) \leq \left(1+\frac{2\beta-1}{\beta^2}\right)\cost_{\calI^2}(S^{*2}, {\sigma}^{*2})$. So an $\alpha_3$-approximate solution $(S, {\sigma})$ to $\calI^3$ has $\cost_{\calI^3}(S, {\sigma}) \leq \left(1+\frac{2\beta-1}{\beta^2}\right)\alpha_3\cdot\cost_{{{\calI^2}}}(S^{*2}, {\sigma}^{*2})$. Then by Lemma~\ref{lemma:I3-to-I2}, we can find a solution $(S', \sigma')$ to $\calI^2$ such that $\cost_{\calI^2}(S', {\sigma}') \leq \frac{2{\beta}}{2{\beta}-1}\cost_{\calI^3}(S, \sigma) \leq \left(\frac{2\beta}{2\beta-1} + \frac{2}{\beta}\right)\alpha_3\cost_{\calI^2}(S^{*2}, {\sigma}^{*2})$, which is an $\left(\alpha_2 = \left(\frac{2\beta}{2\beta-1} + \frac{2}{\beta}\right)\alpha_3\right)$-approximate solution to $\calI^2$.
	\end{proof}
	
	Thus, it suffices to focus on the instance $\calI^3$ from now on. As discussed, we can remove facilities outside $\union_{v \in S^\circ}N_v$.

	\subsection{Transportation with Configurable Supplies and Demands Problem}
	\label{subsec:TCSD}
	We show that the LBFL-P instance $\calI^3$ is equivalent to an instance $\calI^4$ of the transportation with configurable supplies and demands (TCSD) problem. We describe the instance $\calI^4$ directly, since it is the only TCSD instance we deal with. We are given the metric $(S^\circ, d^2)$, $(n_v)_{v \in S^\circ}$ and $(N_v)_{v \in S^\circ}$ as in $\calI^3$. 
	For each $v \in {S^\circ}$, we define a set $R_v$ of pairs in $\Z_{\geq 0} \times \Z$ as follows: 
	\begin{align*}
		R_v = \set{\left(\frac{2\beta-1}{2\beta^2}n_v\ell_v, n_v\right)} \cup \set{(f^2_i, n_v - B_i):i \in N_v}.
	\end{align*}
	
	In the output to the instance $\calI^4$, we need to specify a pair $(g_v, z_v) \in R_v$ for every $v \in S^\circ$. If at a location $v$, we have $z_v \geq 0$, then we have $z_v$ units of supply at $v$; if $z_v < 0$, then we have $z_v$ units of demand at $v$. To be able to satisfy all the demands, we require $\sum_{v \in S^\circ}z_v \geq 0$. Then, the goal of the problem is to minimize $\cost_{\calI^4}(g, z):= g(S^\circ) + \TC_{d_2}(z)$, where $\TC_{d_2}(z)$ is the minimum  transportation cost for satisfying all the demands using the supplies.  Formally, 
		\begin{align*}
			\TC_{d_2}(z) := \min_{\psi} \sum_{v, v' \in {S^\circ}}\psi(v, v')d^2(v, v'),
		\end{align*} 
		where $\psi$ under the ``min'' operator is over all functions $\psi: {S^\circ} \times {S^\circ} \to \Z_{\geq 0}$ satisfying 
		\begin{align*}
			z_v + \sum_{v' \in {S^\circ}}\psi(v', v) - \sum_{v' \in {S^\circ}}\psi(v, v') \geq 0, \forall v \in {S^\circ}.
		\end{align*}
	We denote the instance $\calI^4$ as $(S^\circ, d^2,  \vec R = (R_v)_{v \in S^\circ})$.  We call the transportation with configurable supplies and demands problem since we need to choose a configuration of supplies and demands and then solve a transportation problem for the configuration.
	
	Recall that in a solution $(S, \sigma)$ to $\calI^3$, we open a set of facilities $S \subseteq \union_{v \in S^\circ}N_v$ satisfying $\forall v \in S^\circ, |S \cap N_v|\leq 1$, and connect some clients to $S$ so as to satisfy the lower bound constraints for facilities in $S$; we do not need to connect all clients. For every $v \in S^\circ$, if no facility in $N_v$ is open, a penalty cost of $\frac{2\beta-1}{2\beta^2}n_v\ell_v$ is incurred.   The goal is to minimize the sum of facility cost, connection cost and penalty cost.
	
	The equivalence between $\calI^3$ and $\calI^4$ can be easily seen by treating each client in the instance $\calI^3$ as a unit of supply in $\calI^4$.  For each $v \in S^\circ$, we may open $0$ or $1$ facility in $N_v$ in the solution $(S, \sigma)$ for $\calI^3$. If we do not open any facility, then we need to pay a penalty cost of $\frac{2\beta-1}{2\beta^2}n_v\ell_v$ and the $n_v$ clients at $v$ can provide $n_v$ units of supply; thus we have $\left(\frac{2\beta-1}{2\beta^2}n_v\ell_v, n_v\right) \in R_v$.  If we open $i \in N_v$, then we need to pay a facility cost of $f^2_i$. Depending on whether $n_v \geq B_i$ or not, we may either have $n_v - B_i$ units of supply, or $B_i - n_v$ units of demand at $v$. In either case, the correspondent pair is $(f^2_i, n_v - B_i)$. All the demands in $\calI^4$ have to be satisfied, since we must have $B_i$ clients connected to an open facility $i$ in a solution to $\calI^3$; but some units of supply may be unused, since we do not need to connect all the clients in $\calI^3$.

	Thus, any $\alpha_3$-approximate solution to $\calI^3$ corresponds to an $(\alpha_4 = \alpha_3)$-approximate solution to $\calI^4$. From now on, we focus on the TCSD instance $\calI^4=(S^\circ, d^2, \vec R = (R_v)_{v \in S^\circ})$.  Notice that for every $v \in S^\circ$, we have that $(f^2_v = 0, n_v - B_v) \in R_v$. 
		
\subsection{Converting TCSD instance $\calI^4$ to a CFL instance $\calI^5$}
\label{subsec:CFL}
	In this section, we shall show how to convert the TCSD instance $\calI^4 = ({S^\circ}, d^2,  \vec R)$ to a CFL instance $\calI^5$. To avoid confusion, we shall use demands and suppliers to denote clients and facilities, when we describe the CFL instance. We first round up each non-zero $g$ value in $\vec R$ to the nearest integer power of $2$. From now on we assume that for every $v \in {S^\circ}$ and every pair $(g, z) \in R_v$, we have either $g = 0$ or $g$ is an integer power of $2$. This incurs a factor of $2$ loss that we take into account in Theorem~\ref{thm:I4-and-I5}. If there are two different pairs $(g, z), (g', z') \in R_v$ with $g \leq g'$ and $z \geq z'$, then we can simply remove $(g', z')$ from $R_v$. Thus, we can assume that we can order the pairs in $R_v$ such that both the $g$ and $z$ values in the ordering are strictly increasing.

	Now we are going to construct our CFL instance $\calI^5$. The metric space for the CFL instance is $({S^\circ}, d^2)$ and we need to specify the demands and suppliers to put on each location $v \in {S^\circ}$.  Focus on a location $v \in {S^\circ}$ and the set $R_v$ of pairs. Assume $R_v = \{(h^v_\ell, y^v_\ell): \ell \in [L^v]\}$ for some $L^v \geq 1$. We assume $ 0 = h^v_1 < h^v_2 < h^v_3 < \cdots < h^v_{L^v}$ (recall that $(0, z) \in R_v$ for some $z$) and $y^v_1 < y^v_2 < y^v_3 < \cdots < y^v_{L^v}$ ($y$ values may be negative).   If $y^v_1 \leq 0$, then we put $|y^v_1| =  - y^v_1$ units of demand at location $v$; otherwise,  we build at $v$ a free supplier $k^v_1$ with $y^v_1$ units of supply.  For every $\ell =2, 3, \cdots, L^v$, we build at $v$ a supplier $k^v_\ell$ of cost $h^v_\ell$ with $y^v_\ell - y^v_{\ell-1}$ units of supply.  This finishes the construction of the CFL instance.
		
	\begin{lemma}
		\label{lemma:I4-to-I5}
		For every solution $( g,  z)$ to $\calI^4$, there is a solution to $\calI^5$ whose cost is at most $2\cost_\calI^2( g,  z)$,
	\end{lemma}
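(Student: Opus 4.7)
The plan is to exhibit an explicit CFL solution for $\calI^5$ that mirrors the choices in $(g, z)$, bounding facility cost via the power-of-$2$ rounding and connection cost via the TCSD transportation itself.

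For each $v \in S^\circ$, let $\ell_v \in [L^v]$ be the index with $(g_v, z_v) = (h^v_{\ell_v}, y^v_{\ell_v})$, and in $\calI^5$ open the suppliers $k^v_1, k^v_2, \ldots, k^v_{\ell_v}$ at $v$. Since $h^v_1 = 0$ and $h^v_2 < h^v_3 < \cdots < h^v_{\ell_v}$ are distinct positive integer powers of $2$, their sum is at most $h^v_{\ell_v}(1 + \tfrac12 + \tfrac14 + \cdots) < 2 h^v_{\ell_v} = 2 g_v$; summing over $v$ bounds the total facility cost in $\calI^5$ by $2 g(S^\circ)$.

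For the connection cost, I would verify by a short case analysis on the sign of $y^v_1$ that the aggregate capacity of $k^v_1, \ldots, k^v_{\ell_v}$ at $v$ equals $z_v = y^v_{\ell_v}$ plus the number $\max(0, -y^v_1)$ of demands placed at $v$ in $\calI^5$. Hence after matching every local demand at $v$ to a local supplier at zero cost, the residual excess capacity at $v$ is $\max(0, z_v)$ and the residual unserved demand is $\max(0, -z_v)$ --- precisely the supply/demand profile reconciled by an optimal TCSD flow $\psi$ witnessing $\TC_{d^2}(z)$. Decomposing $\psi$ into unit-flow paths (after removing any cycles and two-way flows, which only reduces $\sum_{v,v'} \psi(v,v') d^2(v,v')$) yields paths from supply locations to demand locations; assigning each residual CFL demand to a residual CFL supplier using the two endpoints of one such path incurs cost at most the path length by the triangle inequality for $d^2$, bounding the total CFL connection cost by $\TC_{d^2}(z)$.

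Combining the two bounds produces a CFL solution of cost at most $2 g(S^\circ) + \TC_{d^2}(z) \leq 2 \cost_{\calI^4}(g, z)$, as required. The main obstacle is just the bookkeeping in the sign-case analysis on $y^v_1$ and $y^v_{\ell_v}$ to confirm that local matching really leaves exactly the claimed residual profile; once that is done, the flow-decomposition step is routine and introduces no further loss.
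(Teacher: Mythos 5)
Your proposal is correct and takes essentially the same approach as the paper: open $k^v_1, \ldots, k^v_{\ell_v}$ at each $v$, bound the opening cost by $2g_v$ via the geometric series of distinct powers of two, and observe that the resulting supply/demand profile gives connection cost $\TC_{d^2}(z)$. The paper compresses the connection-cost step into the observation that the net supply at each $v$ is exactly $z_v$ (so the CFL transportation problem \emph{is} the TCSD transportation problem), whereas you spell out the local matching and flow decomposition; this is the same argument with the bookkeeping the paper leaves to the reader made explicit.
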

	
	\begin{proof}
		Focus on some $v \in {S^\circ}$ and assume $(g_v, z_v) = (h^v_\ell, y^v_\ell)$. In the solution for $\calI^5$,
		we open the free supplier $k^v_1$ if it exists, and we open $k^v_{\ell'}$ for every $\ell' = 2, 3, \cdots, \ell$. It is easy to see that at $v$, the net-supply is exactly $y^v_1 + (y^v_2 - y^v_1) + (y^v_3 - y^v_2) + \cdots + (y^v_{\ell} - y^v_{\ell-1}) = y^v_{\ell} = z_v$.   Thus, the connection cost of the solution for $\calI^5$ is exactly $\TC_{d_2}(z)$.  Also, the cost for opening all suppliers at $v$ is at most $\sum_{\ell' = 2}^\ell h^v_{\ell'} \leq 2g^v_\ell = 2g_v$; we used the fact that $h^v_2 < h^v_3 < \cdots < h^v_\ell$ and the numbers are all integer powers of $2$.  
		This finishes the proof of the lemma.
	\end{proof}
	
	\begin{lemma}
		\label{lemma:I5-to-I4}
		Any solution to $\calI^5$ can be converted to a solution $( g,  z)$ to $\calI^4$ with cost at most that of the solution to $\calI^5$.  
	\end{lemma}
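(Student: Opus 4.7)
The plan is to reverse-engineer the construction of $\calI^5$ in the natural way: read off the set of suppliers opened at each $v\in S^\circ$ and collapse it into a single pair in $R_v$ corresponding to the largest index used, then reuse the CFL transportation as the TCSD transportation $\psi$.

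Fix a solution to $\calI^5$ and, for each $v\in S^\circ$, let $T_v\subseteq[L^v]$ be the indices of open suppliers at $v$ (so $1\in T_v$ is only possible when $y^v_1>0$, i.e., when the free supplier $k^v_1$ exists). Set $\ell^*_v:=\max T_v$ if $T_v\ne\emptyset$ and $\ell^*_v:=1$ otherwise, and take $(g_v,z_v):=(h^v_{\ell^*_v},y^v_{\ell^*_v})\in R_v$. I would first verify $g_v\le \sum_{\ell\in T_v,\ell\ge 2}h^v_\ell$, which is the facility cost paid at $v$ in $\calI^5$: if $\ell^*_v=1$ then $g_v=h^v_1=0$, and if $\ell^*_v\ge 2$ then $\ell^*_v\in T_v$ so the right-hand side already contains the term $h^v_{\ell^*_v}$.

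Next I would bound the net supply $s_v$ produced at $v$ by the $\calI^5$ solution (i.e., the total capacity of opened suppliers at $v$ minus the demand $\max\{0,-y^v_1\}$ placed at $v$). Using the telescoping identity $\sum_{\ell=2}^{\ell^*_v}(y^v_\ell-y^v_{\ell-1})=y^v_{\ell^*_v}-y^v_1$ together with strict monotonicity of the $y^v_\ell$'s, a short case analysis on whether $y^v_1\le 0$ and whether $1\in T_v$ yields $s_v\le y^v_{\ell^*_v}=z_v$ in every case (when $T_v=\emptyset$ one gets $s_v=\min\{0,y^v_1\}\le y^v_1=z_v$). Summing over $v$ gives $\sum_v z_v\ge\sum_v s_v\ge 0$, where the last inequality is the global feasibility of the CFL solution, so $(g,z)$ is a valid $\calI^4$ output.

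Finally I would exhibit a transportation plan $\psi$ for the TCSD configuration $z$ whose cost is at most the CFL connection cost. Writing the CFL solution in transportation form as $\psi^5(v,v')=$ amount of demand at $v'$ served by the suppliers opened at $v$, we have for every $v$,
\begin{align*}
\sum_{v'}\psi^5(v,v')-\sum_{v'}\psi^5(v',v)\ \le\ (\text{supply at }v)-(\text{demand at }v)\ =\ s_v\ \le\ z_v,
\end{align*}
which is exactly the TCSD feasibility constraint, so $\psi:=\psi^5$ is admissible and $\TC_{d^2}(z)\le\sum_{v,v'}\psi^5(v,v')d^2(v,v')$ equals the CFL connection cost. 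Adding the facility-cost bound from the first step yields $\cost_{\calI^4}(g,z)\le$ cost of the given $\calI^5$ solution. There is no real obstacle here; the only step that requires any care is the case analysis showing $s_v\le z_v$, which must separately handle the presence/absence of the free supplier $k^v_1$ and of the demand $-y^v_1$.
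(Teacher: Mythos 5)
Your proof is correct and takes essentially the same approach as the paper: pick the pair indexed by the maximum opened supplier at each $v$, check that $g_v$ is bounded by the opening cost paid and that the chosen net supply $z_v$ dominates the net supply available in the CFL solution, and reuse the CFL transportation plan. The paper states the supply comparison more tersely (``which is at least the net supply at $v$ ... by our definition of $\ell$''), whereas you spell out the telescoping/case analysis, but the argument is the same.
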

	\begin{proof}
		Now we assume we are given  a solution to the CFL instance $\calI^5$.  Focus on a location $v \in {S^\circ}$. Let $\ell = \max\set{\ell: k^v_\ell\text{ is open}}$ or define $\ell = 1$ if no suppliers at $v$ is open in the solution for $\calI^5$. Then we choose $(g_v, z_v) = (h^v_\ell, y^v_\ell)$ in the solution $( g,  z)$ for $\calI^4$.  Notice that the net supply at $v$ in the solution $( g,  z)$ for the TCSD instance $\calI^4$ is $y^v_\ell = y^v_1 + (y^v_2 - y^v_1) + (y^v_3 - y^v_2) + \cdots + (y^v_{\ell} - y^v_{\ell-1})$, which is at least the net supply at $v$ in the solution for the CFL instance $\calI^5$, by our definition of $\ell$. Also, $g_v = h^v_\ell$, which is at most total cost of the opening suppliers at location $v$ in the solution for $\calI^5$.  Thus, we have that $\TC_{d_2}(z)$ is at most the connection cost of the solution for $\calI^5$, and $g(S^\circ)$ is at most the supplier opening cost. This finishes the proof of the Lemma.	
	\end{proof}
	
	\begin{theorem}
		\label{thm:I4-and-I5}
		We can efficiently find an $(\alpha_4 = 4\alpha_{\mathrm{CFL}})$-approximate solution to $\calI^4$.
	\end{theorem}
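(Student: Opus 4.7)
The plan is to compose three ingredients: the rounding of the $g$ values to powers of $2$ (factor $2$), the forward direction Lemma~\ref{lemma:I4-to-I5} (factor $2$), an off-the-shelf $\alpha_{\mathrm{CFL}}$-approximation algorithm for CFL applied to $\calI^5$, and finally the backward direction Lemma~\ref{lemma:I5-to-I4} (no loss). I would first observe that rounding each non-zero $g$ value up to the nearest integer power of $2$ at most doubles it, so if $(g^\dagger, z^\dagger)$ is an optimum solution to the original (pre-rounding) TCSD instance $\calI^4$ with cost $\mathrm{OPT}$, then the same pair $(g^\dagger, z^\dagger)$ remains feasible for the rounded instance and its cost there is at most $2\,\mathrm{OPT}$. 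In particular, the optimum cost $\mathrm{OPT}'$ of the rounded $\calI^4$ satisfies $\mathrm{OPT}' \leq 2\,\mathrm{OPT}$.

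Next, I would apply Lemma~\ref{lemma:I4-to-I5} to the optimum of the rounded $\calI^4$ to exhibit a solution to the CFL instance $\calI^5$ of cost at most $2\,\mathrm{OPT}'$. Feeding $\calI^5$ to the black-box $\alpha_{\mathrm{CFL}}$-approximation algorithm for CFL produces a CFL solution of cost at most $2\alpha_{\mathrm{CFL}}\mathrm{OPT}'$. Finally, invoking Lemma~\ref{lemma:I5-to-I4} converts this CFL solution back into a TCSD solution $(g, z)$ for the rounded $\calI^4$ of cost at most $2\alpha_{\mathrm{CFL}}\mathrm{OPT}'$. Because the rounding only moved $g$ values upward and the transportation term $\TC_{d^2}(z)$ is unaffected, the very same pair $(g, z)$ is also a feasible solution to the original $\calI^4$ and has cost there no larger than in the rounded instance, yielding
\begin{equation*}
  \cost_{\calI^4}(g, z) \;\leq\; 2\alpha_{\mathrm{CFL}}\mathrm{OPT}' \;\leq\; 4\alpha_{\mathrm{CFL}}\mathrm{OPT},
\end{equation*}
which is exactly the claimed $(\alpha_4 = 4\alpha_{\mathrm{CFL}})$-approximation.

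There is no real obstacle here: the theorem is a bookkeeping composition of the two preceding lemmas with the rounding step. The only subtlety worth articulating carefully in the writeup is that Lemma~\ref{lemma:I4-to-I5} is stated for the rounded instance (its proof uses the power-of-$2$ structure to bound $\sum_{\ell'=2}^{\ell} h^v_{\ell'}$ by $2h^v_\ell$), so one must be explicit about comparing the final cost against $\mathrm{OPT}$ rather than $\mathrm{OPT}'$, picking up the extra factor of $2$ precisely from the rounding.
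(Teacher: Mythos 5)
Your proof is correct and follows the same composition as the paper's (rounding to powers of $2$, Lemma~\ref{lemma:I4-to-I5}, the black-box CFL algorithm, Lemma~\ref{lemma:I5-to-I4}); your bookkeeping of the two separate factors of $2$ is in fact more explicit than the paper's terse final sentence. One cosmetic nit: the returned pair $(g,z)$ lives in the \emph{rounded} sets $R_v$, so to obtain a literal solution of the original $\calI^4$ you should replace each $(g_v,z_v)$ by the original pair with the same $z_v$ and no-larger $g$-value, which only decreases the cost and preserves your bound.
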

	\begin{proof}
		Let $(g^*, z^*)$ be the optimum solution to the instance $\calI^4$. By Lemma~\ref{lemma:I4-to-I5},  there is a solution to $\calI^5$ with cost at most $2\cost_{\calI^4}(g^*, z^*)$. Then an $\alpha_{\mathrm{CFL}}$-approximate solution to $\calI^5$ has cost at most $2\alpha_{\mathrm{CFL}}\cost_{\calI^4}(g^*, z^*)$. By Lemma~\ref{lemma:I5-to-I4}, we can efficiently find a solution $(g, z)$ to $\calI^4$ with $\cost_{\calI^4}(g, z) \leq 2\alpha_{\mathrm{CFL}}\cost_{\calI^4}(g^*, z^*)$.  This is a $2\alpha_{\mathrm{CFL}}$-approximate solution. 
		
		Considering the factor of 2 incurred by rounding the costs in $\calI^4$ to the integer powers of $2$, the finaly approximation ratio we obtain for $\calI^4$ is $\alpha_4 = 4\alpha_{\mathrm{CFL}}$.
	\end{proof}

	\subsection{Combining Everything} \label{subsec:accounting}
	We use the algorithm of \cite{PTW01} to solve the CFL instance $\calI^5$ to obtain an $(\alpha_{\textrm{CFL}} = 5)$-approximation for the instance. We set ${\beta} = \frac23$. Applying Theorems~\ref{thm:I4-and-I5}, \ref{thm:I2-and-I3}, \ref{thm:I1-and-I2}, and \ref{thm:I-and-I1}, and the fact that $\calI^3$ and $\calI^4$ are equivalent, the approximation ratios for all instances in our reduction are as follows:
	\begin{alignat*}{2}
		\alpha_3 &= \alpha_4 = 4\alpha_{\mathrm{CFL}} = 4 \times 5 = 20, &\qquad \alpha_2 &= \left(\frac{2\beta}{2\beta-1} + \frac{2}{\beta}\right)\alpha_3= 7 \times 20 = 140,\\
		\alpha_1 &= 4\alpha_2 = 560, &\qquad \alpha&=\alpha_1\left(1 + \frac{2}{1-\beta}\right) + \frac{2}{1-\beta} = 560 \times 7 + 6 \leq 4000.
	\end{alignat*}
	This finishes the proof of Theorem~\ref{thm:main}.

\section{Conclusion}
	In this paper, we developed a $4000$-approximation algorithm for the lower bounded facility location (LBFL) problem with general lower bounds. The algorithm  reduces the LBFL problem to the capacitated facility location (CFL) problem via a sequence of reductions. When describing the algorithm, we focused more on cleanness of presentation, rather than optimizing the final approximation ratio.  So we make all the reductions in a transparent way. It is possible to obtain better approximation ratio by considering the structures of the intermediate instances and analyzing the factors lost jointly. However this will inevitably complicate the algorithm and analysis; even with the complications, it seems hard to use this approach to improve the approximation ratio for LBFL to below 100. 
	
	To obtain a small constant approximation ratio for the LBFL problem, one interesting direction to pursue is to design a simple LP-based algorithm, without going through so many reductions. The natural LP relaxation for the problem has unbounded integrality gap, as shown in \cite{AS13}; so stronger LP relaxations are needed for this task. Using our reduction from LBFL to CFL, and the LP-based $O(1)$-approximation for CFL due to An et al.\ \cite{ASS17}, one could obtain an LP-based algorithm for LBFL in a mechanical way. Such an algorithm could serve as a useful baseline for us to understand the challenges of designing LP-based algorithms for LBFL. 
	
	\bibliographystyle{plain}
	\bibliography{reflist}
	
\end{document}